\def\CC{{\mathbb{C}}}
\def\QQ{{\mathbb{Q}}}
\def\RR{{\mathbb{R}}}
\def\ZZ{{\mathbb{Z}}}
\def\NN{{\mathbb{N}}}
\def\KK{{\mathbb{K}}}
\def\bigO{\mathcal{O}}
\def\bigOsoft{\tilde{\mathcal{O}}}
\newtheorem{Thm}{Theorem}
\newtheorem{Lem}[Thm]{Lemma}
\newtheorem{Prop}[Thm]{Proposition}
\theoremstyle{definition}
\newtheorem{Def}[Thm]{Definition}
\theoremstyle{remark}
\newtheorem{Rem}[Thm]{Remark}
\theoremstyle{remark}
\begin{document}

\title[Recombination for Decomposition]{A Recombination algorithm  for the decomposition of multivariate rational functions}

\date{\today}

\author[G.~Ch\`eze]{Guillaume Ch\`eze}
\address{Institut de Math\'ematiques de Toulouse\\
Universit\'e Paul Sabatier Toulouse 3 \\
MIP B\^at 1R3\\
31 062 TOULOUSE cedex 9, France}
\email{guillaume.cheze@math.univ-toulouse.fr}

\begin{abstract}
In this paper we show how we can compute in a deterministic way the decomposition of a multivariate rational function  with a recombination strategy.  The key point of our recombination strategy is the used of Darboux polynomials. We study the complexity of this strategy and we show that this method improves the previous ones.  In appendix, we explain how the strategy proposed recently by J.~Berthomieu and G.~Lecerf for the sparse factorization can be used in the decomposition setting. Then we deduce a decomposition algorithm in the sparse bivariate case and we give its complexity.
\end{abstract}

\maketitle


\section*{Introduction}

The decomposition of univariate polynomials has been widely studied since 1922, see \cite{Ritt}, and efficient algorithms are known, see \cite{Alagar_Thanh, Barton_Zippel, Kozen_Landau, Gathen_tame, Gathen_wild,Giesbrecht,Kluners}. There also exist results and algorithms in the multivariate case \cite{Dickerson, Gathen_tame,Giesbrecht,GatGutRub}. \\The decomposition of rational functions has also been studied,  \cite{Giesbrecht,Zippel, Alonso_Gut_Recio,Gathen_hom_biv}. In the multivariate case the situation is the following:\\
Let $f(X_1,\dots, X_n)=f_1(X_1,\dots,X_n)/f_2(X_1,\dots,X_n) \in \KK(X_1,\dots,X_n)$ be a rational function, where $\KK$ is a field and $n\geq 2$. It is commonly said to be composite if it can be written $f=u(h)$ where $h(X_1,\dots,X_n) \in \KK(X_1,\dots,X_n)$ and $u \in\KK(T)$ such that $\deg(u) \geq 2$ (recall that the degree of a rational function is the maximum of the degrees of its numerator and denominator after reduction), otherwise $f$ is said to be non-composite. In this paper, we give an algorithm which computes a non-composite rational function $h \in \KK(X_1,\ldots,X_n)$ and a rational function $u \in \KK(T)$ such that $f=u(h)$.

\medskip
In \cite{Cheze2010}, the author shows that we can reduce the decomposition problem to a factorization problem and gives  a probabilistic  and a deterministic algorithm. The probabilistic algorithm is nearly optimal: it performs $\bigOsoft(d^n)$ arithmetic operations. The deterministic one computes $\bigO(d^2)$ absolute factorizations and then performs $\bigOsoft(d^{n+\omega+2})$ arithmetic operations, where $d$ is the degree of $f$ and $ \omega$ is  the \emph{feasible matrix
multiplication} exponent as defined in~\cite[Chapter~12]{GG}. We recall that $ 2 \leq \omega \leq 2.376$. As in  \cite{Cheze2010}, we suppose in this work that $d$ tends to infinity and $n$ is fixed.  We use the classical $\bigO$ and $\bigOsoft$ (``soft $\bigO$'') notation in the neighborhood of infinity as defined in~\cite[Chapter~25.7]{GG}. Informally speaking, ``soft $\bigO$''s are used for readability in order to hide logarithmic factors in complexity estimates. In this paper we improve the complexity of the deterministic algorithm. With this algorithm we just compute two factorizations in $\KK[X_1,\ldots,X_n]$ and then we use a recombination strategy. Under some hypotheses this new method performs $\bigOsoft(d^{n+\omega-1})$ arithmetic operations.  

\medskip
The decomposition of multivariate rational functions appears when we study the kernel of a derivation, see \cite{Ollagnier}.  In \cite{Ollagnier} the author uses Darboux polynomials and gives an algorithm which  works with $\bigOsoft(d^{\omega n})$ arithmetic operations.\\
 In this paper, we are also going to use  Darboux polynomials (see Section \ref{toolbox} for a definition) and we add a recombination strategy. Roughly speaking, we are going to factorize the numerator and the denominator and then thanks to  a property of Darboux polynomials we are going to show that we can recombine the factors and deduce the decomposition.\\

The decomposition of multivariate rational functions also appears when we study intermediate fields of an unirational field and the extended L\"uroth's Theorem, see \cite{Gut_Rub_Sev, Cheze2010}, and when  we study the spectrum of a rational function, see  \cite{Cheze2010} and the references therein.

The study of decomposition is an active area of research: for a study on multivariate polynomial systems see e.g. \cite{faugere_perret,faugere_vonzur}, for a study on symbolic polynomials see e.g. \cite{watt}, for a study on Laurent polynomials see e.g. \cite{watt_laurent}, for effective results on the reduction modulo a prime number of a non-composite polynomial or a rational function see e.g. \cite{ChezeNajib,BodinDebesNajib,BuseChezeNajib}, for combinatorial results see e.g. \cite{vonzur_denom}.

\medskip

In this paper, we improve the strategy proposed in  \cite{Ollagnier}. As in \cite{Ollagnier}, we consider fields with characteristic zero.  Furthermore, as we want to give precise complexity estimate we are going to suppose that:\\

\begin{center}{ Hypothesis (C):\\
  $\KK$ is a number field: $\KK=\QQ[\alpha]$, $\alpha$ is an algebraic number of degree $r$. }\\
\end{center}

\medskip

As in \cite{Cheze2010}, we are  going to suppose that the following hypothesis is satisfied:\\

\begin{center}{Hypothesis  (H):}\end{center}
$$\left\{
\begin{array}{l}
(i)  \deg (f_1+\Lambda f_2)=\deg_{X_n}(f_1 + \Lambda f_2), \textrm{ where } \Lambda \textrm{ is a new variable},\\
(ii)  \, R(\Lambda)=Res_{X_n} \Big( f_1(\underline{0},X_n)+\Lambda f_2(\underline{0},X_n), \, \partial_{X_n} f_1(\underline{0},X_n) + \Lambda  \partial_{X_n} f_2(\underline{0},X_n)\Big) \neq 0 \textrm{ in } \KK[\Lambda].
\end{array}
\right.
$$
where $\deg_{X_n} f$ represents the partial degree of $f$ in the variable $X_n$, $\deg f$ is the total degree of $f$ and $Res_{X_n}$ denotes the resultant relatively to the variable $X_n$.

\medskip

This hypothesis is necessary, because we will use the factorization algorithms proposed in \cite{Lec2007}, where this kind of hypothesis is needed.  Actually, in \cite{Lec2007} the author studies the factorization of a polynomial $F$ and uses hypothesis (L), where (L) is the following:
\begin{center}{ Hypothesis  (L):}\end{center}
$$\left\{
\begin{array}{l}
(i) \deg_{X_n}F=\deg F, \, \textrm{ and } F \textrm{ is monic in } X_n, \\
(ii)  Res_{X_n} \big( F(\underline{0},X_n), \frac{\partial F}{\partial X_n}(\underline{0},X_n) \Big) \neq 0.
\end{array}
\right.
$$
If $F$ is squarefree, then  hypothesis (L)  is not restrictive since it  can be assured by means of a generic linear change of variables, but we will not discuss this question here (for a complete treatment in the bivariate case, see \cite[Proposition 1]{CL}).

\medskip

Roughly speaking, our hypothesis (H) is the hypothesis (L) applied to the polynomial $f_1 + \Lambda f_2$. 
 In (H,$i$) we do not assume that $f_1+ \Lambda f_2$ is monic in $X_n$. Indeed,  the leading coefficient relatively to $X_n$ can be written: $a+\Lambda b$, with $a,b \in \KK$. In our  algorithm, we evaluate $\Lambda$ to $\lambda$ such that $a + \lambda b \neq 0$. Then we can consider the monic part of $f_1+ \lambda f_2$ and we get a polynomial satisfying (L,$i$). Then (H,$i$) is sufficient in our situation. Furthermore, in this paper, we assume $f_1/f_2$ to be  reduced, i.e. $f_1$ and $f_2$ are coprime. We recall in Lemma \ref{Cor_f+gsquarefree} that in this situation $f_1 + \Lambda f_2$ is squarefree. Thus hypothesis (H) is not restrictive.

Furthermore, hypothesis (H) will also be useful in a preprocessing step, see Section \ref{sec_reduc}. In this preprocessing step we reduce the decomposition to two factorizations of squarefree polynomials.

\subsection*{Complexity model}
In this paper the  complexity estimates charge a constant
cost for each arithmetic operation ($+$, $-$, $\times$, $\div$) and
the equality test. All the constants in the base fields   are thought to be freely at our disposal.
\par
In this paper we suppose that \emph{the number of variables $n$ is fixed} and that the degree $d$ tends to infinity. 
\par
Polynomials are represented by dense vectors of their
coefficients in the usual monomial basis. For each integer $d$, we
assume that we are given a computation tree that computes the product
of two univariate polynomials of degree at most $d$ with at most $\bigOsoft(d)$
 operations, independently of the base ring, see \cite[Theorem~8.23]{GG}. Then with a Kronecker substitution we can compute the product of two multivariate polynomials with degree $d$ with $n$ variables with $\bigOsoft(d^n)$ arithmetic operations. We also recall, see \cite[Corollary 11.8]{GG}, that if $\KK$ is an algebraic extension of $\QQ$ of degree $r$ then each field operation in $\KK$ takes $\bigOsoft(r)$ arithmetic operations in $\QQ$.\\
We use the constant $\omega$ to denote a \emph{feasible matrix
multiplication} exponent as defined in~\cite[Chapter~12]{GG}:
two~$n\times n$ matrices over $\KK$ can be multiplied
with~$\bigO(n^\omega)$ field operations. As in~\cite{BiPa1994}
we require that $2 \le \omega \le 2.376$. We recall that the computation of a solution basis  of a linear system with $m$ equations and $d\leq m$ unknowns over $\KK$ takes $\bigO(md^{\omega-1})$ operations in $\KK$ \cite[Chapter~2]{BiPa1994} (see
also~\cite[Theorem~2.10]{Storjohann2000}).\\
In \cite{Lec2006,Lec2007} the author gives a deterministic  algorithm for the multivariate rational factorization. The rational factorization of a polynomial $f$ is the factorization in $\KK[\underline{X}]$, where $\KK$ is  the coefficient field of $f$. This algorithm uses one factorization of a univariate polynomial of degree $d$ and $\bigOsoft(d^{n+\omega-1})$ arithmetic operations, where $d$ is the total degree of the polynomial and $n \geq 2$ is the number of variables.

\subsection*{Main Theorem}
The following theorem gives the complexity result of our algorithm. 
\begin{Thm}\label{main_thm}
Let $f$ be a multivariate rational function in $\QQ[\alpha](X_1, \dots,X_n)$ of degree $d$, where $\alpha$ is an algebraic number of degree $r$.
Under hypotheses (C) and (H), we can compute in a deterministic way the decomposition of $f$ with $\bigOsoft(rd^{n+ \omega -1})$ arithmetic operations over $\QQ$ plus two factorizations of  univariate polynomials of degree $d$ with coefficients in $\QQ[\alpha]$.
\end{Thm}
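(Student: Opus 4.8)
The plan is to follow the reduction-to-factorization paradigm of \cite{Cheze2010}, but with the recombination step realized via Darboux polynomials as announced in the introduction. First I would recall (or invoke) the characterization that $f = f_1/f_2$ is composite if and only if the pencil of hypersurfaces $\{\mu f_1 + \lambda f_2\}_{[\lambda:\mu]}$ has infinitely many reducible members, equivalently the generic member $f_1 + \Lambda f_2 \in \KK(\Lambda)[\underline{X}]$ factors nontrivially; the irreducible factors over $\overline{\KK(\Lambda)}$ are, up to the right normalization, the Darboux polynomials of the associated derivation, and they specialize (for all but finitely many $\lambda$, those excluded being controlled by the resultant $R(\Lambda)$ in hypothesis (H,$ii$)) to the irreducible factors of $f_1 + \lambda f_2$ in $\KK[\underline{X}]$. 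So the algorithmic content is: pick two suitable values $\lambda_1, \lambda_2 \in \KK$ making $f_1 + \lambda_i f_2$ squarefree (Lemma \ref{Cor_f+gsquarefree} plus hypothesis (H) guarantees this is possible and that (L) holds after taking monic parts, so Lecerf's algorithm \cite{Lec2007} applies), factor $f_1 + \lambda_1 f_2$ and $f_1 + \lambda_2 f_2$ in $\KK[\underline{X}]$, and then \emph{recombine}: group the irreducible factors of the first polynomial according to which Darboux polynomial (i.e. which factor coming from the generic pencil) they specialize from, using the factorization of the second polynomial to detect the grouping. The number of groups is $\deg(u)$, and reading off one group gives the numerator (and, from the denominator side, the denominator) of $h$, after which $u$ is recovered by a univariate interpolation / rational reconstruction in $T$.

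For the complexity bookkeeping I would proceed as follows. Choosing $\lambda_1,\lambda_2$ with the required genericity costs evaluating and testing the nonvanishing of $R(\Lambda)$ and of the leading coefficient $a + \Lambda b$: this is negligible, $\bigOsoft(rd^n)$ at worst. Each of the two rational factorizations in $\KK[X_1,\dots,X_n]$ of a degree-$d$ polynomial costs, by \cite{Lec2006,Lec2007}, one univariate factorization of degree $d$ over $\KK = \QQ[\alpha]$ plus $\bigOsoft(d^{n+\omega-1})$ operations in $\KK$; since each $\KK$-operation is $\bigOsoft(r)$ operations over $\QQ$ (Corollary 11.8 of \cite{GG}), this contributes $\bigOsoft(rd^{n+\omega-1})$ arithmetic operations over $\QQ$ and accounts for the "two univariate factorizations" in the statement. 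The preprocessing of Section \ref{sec_reduc} reducing the decomposition of $f$ to these two squarefree factorizations is designed to stay within the same bound. Finally the recombination itself: the grouping of $\bigO(d)$ irreducible factors is determined by solving a linear system (the lattice-of-recombinations computation, à la \cite{Lec2007}), which is an $\bigO(d)\times\bigO(d)$ system over $\KK$ — costing $\bigO(d^{\omega})$ operations in $\KK$, i.e. $\bigOsoft(rd^{\omega})$ over $\QQ$ — and multiplying out each group to form $h$'s numerator and denominator costs $\bigOsoft(d^n)$ per multivariate product, with $\bigO(d)$ such products, hence $\bigOsoft(d^{n+1})$; recovering $u \in \KK(T)$ of degree $\bigO(d)$ by interpolation at $\bigO(d)$ points costs $\bigOsoft(rd)$. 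All these terms are dominated by $\bigOsoft(rd^{n+\omega-1})$ since $n \ge 2$ and $\omega \ge 2$, giving the claimed total.

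The main obstacle — and the part deserving the most care — is proving that the recombination via Darboux polynomials is \emph{correct} and \emph{deterministic}: that is, that for the chosen $\lambda_i$ the irreducible factors of $f_1+\lambda_i f_2$ in $\KK[\underline{X}]$ really do partition into exactly $\deg(u)$ blocks corresponding to the factors of the generic pencil member, that this partition can be certified by comparing the two specializations (no spurious coincidences or splittings occur, which is where hypothesis (H,$ii$) and the resultant $R(\Lambda)$ enter to rule out the bad $\lambda$'s), and that the resulting $h$ is genuinely non-composite. This requires the key property of Darboux polynomials from Section \ref{toolbox} (their product over a block is, up to a scalar in $\KK(\Lambda)$, of the form $g_1 + \Lambda g_2$ with $g_1/g_2$ the sought non-composite $h$) together with a specialization argument showing the factorization pattern is preserved away from the zero set of $R$. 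I would isolate this as the central lemma and then assemble the complexity estimate around it; everything else is routine once that correctness statement is in hand.
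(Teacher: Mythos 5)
Your overall architecture (reduce to two factorizations of members of the pencil $f_1+\lambda f_2$, then recombine via Darboux polynomials) is the paper's, but the step you yourself identify as the central lemma --- how the grouping of the irreducible factors is actually detected and certified --- is missing, and the mechanism you sketch for it would not work. You propose to certify the partition by ``comparing the two specializations'' and invoke ``a specialization argument showing the factorization pattern is preserved away from the zero set of $R$''. Hypothesis (H,$ii$) only guarantees squarefreeness of the chosen fibers; it does not prevent the chosen fibers from factoring more finely than the generic member (by Lemma \ref{Lem_Bodin} each fiber is a product of blocks $h_1-t_ih_2$ with the $t_i$ possibly outside $\KK$ and the blocks possibly reducible over $\KK$), and deterministically avoiding such spectral degeneracies is precisely what made the earlier deterministic algorithm of \cite{Cheze2010} expensive --- $R(\Lambda)$ alone does not rule them out. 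Since the factors of $f_1+\lambda_1 f_2$ and of $f_1+\lambda_2 f_2$ are different polynomials, no pattern comparison between the two factorizations identifies the blocks; you never specify the linear system that is supposed to compute the grouping, and the $\bigO(d)\times\bigO(d)$ size you assign to it is unsubstantiated.

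The paper's recombination is concrete and different: after the \textsf{Good homography} preprocessing (which also forces $u_1$ and $u_2$ to each have a root in $\KK$, Remark \ref{Rem:alphainK} --- an ingredient your sketch lacks but which the correctness proof uses), every irreducible factor $F_{k,j}$ of the squarefree numerator and denominator is a Darboux polynomial of the Jacobian derivation $D_{F_1/F_2}$, and its cofactor $\mathcal{G}_{F_{k,j}}=D_{F_1/F_2}(F_{k,j})/F_{k,j}$ turns products into sums (Proposition \ref{prop_cof}). The admissible recombinations are then exactly the kernel of the explicit system $\sum_j x_{1,j}\mathcal{G}_{F_{1,j}}-\sum_j x_{2,j}\mathcal{G}_{F_{2,j}}=0$, which has $\bigO(nrd^{n})$ equations (coefficients of the cofactors) and at most $2d$ unknowns; Proposition \ref{key_prop} proves that the projected kernels are spanned by the $0$--$1$ vectors of the blocks, and picking a minimal-degree basis vector yields $h$ up to a degree-one homography. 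This kernel computation, not a $d\times d$ system, is the step that costs $\bigOsoft(rd^{n+\omega-1})$ and matches the factorization cost in the theorem. Without this (or an equivalent) correctness argument, your text is a plan rather than a proof of the statement.
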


\subsection*{Comparison with other algorithms}
There already exist several algorithms for the decomposition of rational functions.  They all use the same global strategy: first compute $h$, and then deduce $u$. The first step is the difficult part of the problem. In \cite{Cheze2010}, we explain how we can perform the second step, i.e. compute $u$ from $h$ and $f$, with $\bigOsoft(d^n)$ arithmetics operations.\\
In \cite{Gut_Rub_Sev}, the authors provide two algorithms to decompose a multivariate rational function. These algorithms run in exponential time in the worst case. In the first one we have to factorize polynomials with $2n$ variables \mbox{$f_1(\underline{X})f_2(\underline{Y})-f_1(\underline{Y})f_2(\underline{X})$} and to look for factors of the following kind $h_1(\underline{X})h_2(\underline{Y})-h_1(\underline{Y})h_2(\underline{X})$. The authors say that in the worst case the number of candidates to be tested is exponential in $d=\deg( f_1/f_2)$. Indeed, the authors test all the possible factors.\\
  In the second algorithm, for each pair of factors $(h_1,h_2)$ of $f_1$ and $f_2$ (i.e. $h_1$ divides $f_1$ and $h_2$ divides $f_2$), we have to test if there exists $u \in \KK(T)$ such that $f_1/f_2=u(h_1/h_2)$. Thus in the worst case we also have an exponential number of candidates to be tested. \\
To the author's knowledge, the first polynomial time algorithm is due to J.~Moulin-Ollagnier, see \cite{Ollagnier} . This algorithm relies on the study of the kernel of the following derivation: $\delta_{\omega}(F)=\omega \wedge dF$, where $F \in \KK[\underline{X}]$ and $\omega=f_2df_1-f_1df_2$. In \cite{Ollagnier} the author shows that we can reduce the  decomposition of a rational function to linear algebra.  The bottleneck of this algorithm is the computation of the kernel of a matrix. The size of this matrix is $\bigO(d^n) \times \bigO(d^n)$, then the complexity of this deterministic algorithm belongs to $\bigO(d^{n\omega})$. \\

The reduction of the decomposition problem to a factorization problem is classical, see e.g. \cite{Kluners,Giesbrecht,Gathen_hom_biv, Gut_Rub_Sev}. In  \cite{Cheze2010} the author shows that  if we choose a probabilistic approach  then two factorizations in $\KK[X_1,\ldots,X_n]$ are sufficient to get $h$ and furthermore we do not have a recombination problem. This gives a nearly optimal algorithm. For the deterministic approach the author uses a property on the pencil $f_1 - \lambda f_2$ and shows that with $\bigO(d^2)$ absolute factorization (i.e. factorization in the algebraic closure of $\KK$) we can get $h$. This deterministic strategy works with $\bigOsoft(d^{n+\omega +2})$ arithmetic operations.\\
In this paper, we are going to show that we can obtain a deterministic algorithm with just two factorizations in $\KK[X_1,\ldots,X_n]$ and a recombination strategy. Our algorithm uses at most $\bigOsoft(d^{n+\omega-1})$ arithmetic operations. This cost corresponds to the cost of the factorization and the recombination step.\\

Our recombination problem comes from this factorization:\\
If $f_1/f_2=u_1/u_2(h_1/h_2)$ then 
$$f_1 - \lambda f_2= e (h_1 - t_1 h_2)\cdots (h_1 -t_k h_2)$$
where $\lambda, e \in \KK$, $k= \deg (u_1/u_2)$ and $t_i$ are the roots of the univariate polynomial $u_1(T) - \lambda u_2(T)$, see Lemma \ref{Lem_Bodin}.\\
Thus with the factors $h_1 -t_1 h_2$ and $h_1 -t_2 h_2$ we can deduce $h$. Unfortunately these factors are not necessarily in $\KK[X_1,\ldots,X_n]$ and are not necessarily irreducible. In this paper we show how we can reduce the problem to a factorization problem in $\KK[X_1,\ldots,X_n]$ and how we can recombine the irreducible factors of $f_1 -\lambda f_2$ to get $h$.\\

We can see our recombination scheme as a \emph{logarithmic derivative method}. 
Roughly speaking, the logarithmic derivative method works as follow:\\
If $F(X,Y)=\prod_{j=1}^t \mathcal{F}_j(X,Y)$, where $\mathcal{F}_j(X,Y) \in \mathbb{A}$ and $\mathbb{A} \supset \KK[X,Y]$ (for example $\mathbb{A}=\KK[[X]][Y]$), then we can write the irreducible factors $F_i(X,Y) \in \KK[X,Y]$ of $F$ in the following way: $F_i=\prod_{j=1}^t \mathcal{F}_j^{e_{i,j}}$, where $e_{i,j} \in \{0,1\}$. Thus we just have to compute the exponents $e_{i,j}$ to deduce $F_i$. We compute these exponents thanks to this relation:
$$\dfrac{\partial_X F_i}{F_i}=\sum_{j=1}^t e_{i,j} \dfrac{\partial_X \mathcal{F}_j}{\mathcal{F}_j}.$$
With this relation the exponents $e_{i,j}$ are now coefficients, and we can compute them with linear algebra.\\
This strategy has already been used by several authors in order to factorize polynomials see e.g. \cite{Bellabas,Bostan,Lec2006,CL,Weiman}. Here, we use this kind of technique for the decomposition problem. With this strategy the recombination part of our algorithm corresponds to the computation of the kernel of a $\bigO(d^n) \times \bigO(d)$ matrix. \\
 In our context, we do not use exactly a logarithmic derivative. We use a more general derivation, but we  use the same idea: if a mathematical object transforms a product into a sum then the recombination problem becomes a linear algebra problem. In this paper this mathematical object is the cofactor, see Proposition \ref{prop_cof}.

\subsection*{Structure of this paper} In Section 1, we recall some results about the Jacobian derivative and Darboux polynomials. In Section 2, we describe a reduction step which eases the recombination strategy. In other words we explain how we can  reduce the decomposition problem to a factorization problem. In Section 3, we explain how we can get $h$ with a recombination strategy. In Section \ref{sec:examples}, we describe our algorithm with two examples. In Section \ref{sec:conclusion} we conclude this paper with a remark on Darboux method and the logarithmic derivative method.  In appendix, we explain how the strategy proposed recently by J.~Berthomieu and G.~Lecerf  in \cite{BL2010} for the sparse factorization can be used in the decomposition setting. Then we deduce a decomposition algorithm in the sparse bivariate case and we give its complexity.

\subsection*{Notations}
All the rational functions are supposed to be reduced.\\
Given a polynomial $f$, $\deg(f)$ denotes its total degree.\\
Given a rational function $f=f_1/f_2$, $\deg(f)$ denotes $\max\big(\deg(f_1), \deg(f_2)\big)$.\\ 
For the sake of simplicity, sometimes we write $\KK[\underline{X}]$ instead of $\KK[X_1,\dots,X_n]$, for $n\geq 2$.\\
$u \circ h$ means $u(h)$.\\
$Res(A,B)$ denotes the resultant of two univariate polynomials $A$ and $B$.\\
$|S|$ is the cardinal of the set $S$.

\section{Derivation and Darboux polynomials}\label{toolbox}
We introduce the main tool of our algorithm. 

\begin{Def}
A $\KK$-derivation $D$ of the polynomial ring $\KK[X_1,\ldots,X_n]$ is a $\KK$-linear map from $\KK[X_1,\ldots,X_n]$ to itself that satisfies the Leibniz rule for the product
$$D(f.g)=D(f).g+f.D(g).$$
\end{Def}
A $\KK$-derivation has a unique  extension to $\KK(X_1,\ldots,X_n)$ and then we will also denote by $D$ the extended derivation. 
\begin{Def}
Given a rational function $f_1/f_2$, the Jacobian derivative associated to $f_1/f_2$ is the following vector derivation, i.e. an $(n-1)$-tuple of derivations:
\begin{eqnarray*}
D_{f_1/f_2}: \KK[X_1,\ldots,X_n]& \longrightarrow& \Big(\KK[X_1,\ldots,X_n]\Big)^{n-1}\\
F&\longmapsto& f_2^2.\begin{pmatrix}\partial_{X_1}\big(f_1/f_2)\partial_{X_2} F- \partial_{X_2}(f_1/f_2)\partial_{X_1}F\\
\vdots\\
\partial_{X_1}\big(f_1/f_2)\partial_{X_n} F- \partial_{X_n}(f_1/f_2)\partial_{X_1}F
\end{pmatrix}.
\end{eqnarray*}
\end{Def}

The Jacobian derivative has the following property:

\begin{Prop}\label{Petrav}
Given $f=f_1/f_2$ and $g \in \KK(X_1,\ldots,X_n)\setminus \KK$ the following propositions are equivalent:
\begin{enumerate}
\item The rank of the Jacobian matrix 
$$Jac(f,g)=\begin{pmatrix} 
\dfrac{\partial f}{\partial X_1} &\cdots &\dfrac{\partial f}{\partial X_n}\\

\dfrac{\partial g}{\partial X_1} &\cdots & \dfrac{\partial g}{\partial X_n}\\
\end{pmatrix}$$
is equal to one;
\item $D_{f_1/f_2}(g)=0$;
\item there exists $h$ in $\KK(X_1,\ldots,X_n)$ such that $f=u(h)$ and $g=v(h)$ for $u,v \in \KK(T)$.
\end{enumerate}
\end{Prop}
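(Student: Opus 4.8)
The plan is to establish the three statements as a cycle by proving $(3)\Rightarrow(1)$, then $(1)\Rightarrow(3)$, and finally $(1)\Leftrightarrow(2)$; the implication $(1)\Rightarrow(3)$ is where the real content lies.

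For $(3)\Rightarrow(1)$ I would simply differentiate: if $f=u(h)$ and $g=v(h)$ with $u,v\in\KK(T)$, then the chain rule gives $\partial_{X_j}f=u'(h)\,\partial_{X_j}h$ and $\partial_{X_j}g=v'(h)\,\partial_{X_j}h$ for every $j$, so both rows of $Jac(f,g)$ are scalar multiples of the single vector $(\partial_{X_1}h,\dots,\partial_{X_n}h)$ and the rank is at most one; it is at least one because $g\notin\KK$ forces some $\partial_{X_j}g\neq 0$.

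For $(1)\Rightarrow(3)$ I would use that $\KK$ has characteristic zero, so the Jacobian criterion for algebraic independence identifies $\mathrm{rank}\,Jac(f,g)$ with $\trdeg_\KK\KK(f,g)$. Hence (1) says $\trdeg_\KK\KK(f,g)=1$, and since $g\notin\KK$ we obtain a subfield $\KK\subsetneq\KK(f,g)\subseteq\KK(X_1,\dots,X_n)$ of transcendence degree one over $\KK$. By the extended L\"uroth theorem --- any transcendence-degree-one subfield of a purely transcendental extension $\KK(X_1,\dots,X_n)$ is itself a rational function field, equivalently, a unirational curve in characteristic zero is rational --- there is $h\in\KK(X_1,\dots,X_n)$ with $\KK(f,g)=\KK(h)$. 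Then $h$ is transcendental over $\KK$, so each element of $\KK(h)$ equals $w(h)$ for a unique $w\in\KK(T)$; applying this to $f$ and to $g$ produces the desired $u,v\in\KK(T)$ with $f=u(h)$ and $g=v(h)$. This step is the main obstacle: it is precisely the passage from a rank/derivation condition to an actual inner function, and it genuinely rests on the structure theory of low--transcendence-degree subfields (L\"uroth), not on formal manipulations.

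For $(1)\Leftrightarrow(2)$ I would observe that the $i$-th entry of $D_{f_1/f_2}(g)$ is exactly $f_2^2$ times the $2\times2$ minor of $Jac(f,g)$ on columns $1$ and $i+1$; since $f_2\neq 0$, vanishing of $D_{f_1/f_2}(g)$ is equivalent to vanishing of all these column-$1$ minors. If $\mathrm{rank}\,Jac(f,g)=1$ then all $2\times2$ minors vanish, hence so does $D_{f_1/f_2}(g)$. Conversely, assume $D_{f_1/f_2}(g)=0$ and write $f=f_1/f_2$; I may assume $\partial_{X_1}f\neq 0$ (if $f\in\KK$ all three statements hold trivially with $h=g$, and otherwise $\partial_{X_1}f\neq 0$ is arranged by a generic linear change of $X_1,\dots,X_n$). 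From the relations $\partial_{X_1}f\,\partial_{X_j}g=\partial_{X_j}f\,\partial_{X_1}g$ for $j=2,\dots,n$, multiplying the $j$-th by $\partial_{X_i}f$, the $i$-th by $\partial_{X_j}f$ and subtracting gives $\partial_{X_1}f\,\bigl(\partial_{X_i}f\,\partial_{X_j}g-\partial_{X_j}f\,\partial_{X_i}g\bigr)=0$; as $\partial_{X_1}f\neq 0$ in the field $\KK(X_1,\dots,X_n)$, every $2\times2$ minor of $Jac(f,g)$ vanishes, so $\mathrm{rank}\,Jac(f,g)\le 1$, and it is $\ge 1$ because $g\notin\KK$. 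Thus (2) implies (1), which closes the cycle of equivalences.
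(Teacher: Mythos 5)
Your overall architecture is sound, and it is genuinely different from the paper's proof: the paper does not reprove the statement at all, it cites Petravchuk--Iena and only adds a descent argument (composite over $\KK$ iff composite over $\overline{\KK}$) to drop their algebraic closedness assumption, whereas you give a self-contained argument. Your $(3)\Rightarrow(1)$ (chain rule), $(1)\Rightarrow(3)$ (Jacobian criterion for transcendence degree in characteristic zero plus the extended L\"uroth theorem), and $(1)\Rightarrow(2)$ (all $2\times 2$ minors vanish) are correct as written.

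The gap is in $(2)\Rightarrow(1)$, in your treatment of the case $\partial_{X_1}f=0$ with $f\notin\KK$. You cannot ``arrange $\partial_{X_1}f\neq 0$ by a generic linear change of variables'': the hypothesis $D_{f_1/f_2}(g)=0$ records only the $2\times 2$ minors of $Jac(f,g)$ on column pairs $(1,j)$, and after a linear substitution the new column-$1$ minors are combinations of \emph{all} the old minors, so hypothesis (2) is not preserved by the substitution and the reduction collapses. Worse, in this degenerate case the implication itself fails, so no patch of this kind can work: for $n\geq 3$ take $f=X_2$ and $g=X_3$; then every component of $D_{f}(g)$ is $\partial_{X_1}f\,\partial_{X_j}g-\partial_{X_j}f\,\partial_{X_1}g=0$, yet $Jac(f,g)$ has rank $2$ and $f,g$ are algebraically independent. (For $n=2$ there is no problem, since the single component of $D_{f_1/f_2}(g)$ is the unique minor.) So your proof of $(2)\Rightarrow(1)$ is only valid under the additional nondegeneracy assumption $\partial_{X_1}f\neq 0$ (i.e.\ $f$ genuinely involves the pivot variable $X_1$ used in the definition of $D_{f_1/f_2}$); you should either state that hypothesis explicitly, choose as pivot a variable on which $f$ depends, or, as the paper does, invoke the precise statement of \cite{petravchuk}.
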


\begin{proof}
See \cite{petravchuk} for a proof. In \cite{petravchuk}, $\KK$ is supposed to be algebraically closed. However, we can remove this hypothesis because we have the equivalence: $f$ is composite over $\KK$ if and only if $f$ is composite over $\overline{\KK}$, see e.g. \cite[Theorem 13]{BuseChezeNajib}.
\end{proof}


\begin{Def}
Given $D$ a vector derivation i.e. an $m$-tuple of  derivations, a polynomial $F \in \KK[\underline{X}]$ is said to be a Darboux polynomial of $D$ if there exists $\mathcal{G} \in \big(\KK[\underline{X}]\big)^m$ such that $D(F)=F.\mathcal{G}$. $\mathcal{G}$ is called the cofactor of $F$ for the derivation $D$.
\end{Def}


We deduce easily the following classical propositions.
\begin{Prop}\label{rem_darboux}
$f_1$ and $f_2$ are Darboux polynomials of $D_{f_1/f_2}$.
\end{Prop}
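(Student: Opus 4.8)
The plan is a direct computation from the definitions. Write $f=f_1/f_2$ and recall that for each $i$ we have $\partial_{X_i}(f_1/f_2)=\bigl(f_2\,\partial_{X_i}f_1-f_1\,\partial_{X_i}f_2\bigr)/f_2^2$. Substituting this into the $j$-th component of $D_{f_1/f_2}(F)$ (for $j=2,\dots,n$), namely $f_2^2\bigl(\partial_{X_1}(f)\,\partial_{X_j}F-\partial_{X_j}(f)\,\partial_{X_1}F\bigr)$, the prefactor $f_2^2$ cancels the denominators and we are left with the polynomial expression
$$
\bigl(f_2\,\partial_{X_1}f_1-f_1\,\partial_{X_1}f_2\bigr)\partial_{X_j}F-\bigl(f_2\,\partial_{X_j}f_1-f_1\,\partial_{X_j}f_2\bigr)\partial_{X_1}F .
$$
This already shows $D_{f_1/f_2}$ indeed maps $\KK[\underline X]$ into $\bigl(\KK[\underline X]\bigr)^{n-1}$.

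Next I would specialize $F=f_1$. Expanding the expression above, the two terms containing the factor $f_2\,\partial_{X_1}f_1\,\partial_{X_j}f_1$ cancel, and what remains is
$$
f_1\bigl(\partial_{X_1}f_1\,\partial_{X_j}f_2-\partial_{X_j}f_1\,\partial_{X_1}f_2\bigr),
$$
i.e. $f_1$ times a polynomial. Doing the same with $F=f_2$, the terms containing $f_1\,\partial_{X_1}f_2\,\partial_{X_j}f_2$ cancel and we obtain
$$
f_2\bigl(\partial_{X_1}f_1\,\partial_{X_j}f_2-\partial_{X_j}f_1\,\partial_{X_1}f_2\bigr).
$$
Hence both $f_1$ and $f_2$ satisfy $D_{f_1/f_2}(f_i)=f_i\cdot\mathcal G$, where $\mathcal G\in\bigl(\KK[\underline X]\bigr)^{n-1}$ is the $(n-1)$-tuple whose $j$-th entry is $\partial_{X_1}f_1\,\partial_{X_j}f_2-\partial_{X_j}f_1\,\partial_{X_1}f_2$; in particular they are Darboux polynomials of $D_{f_1/f_2}$ (and, as a bonus worth recording, they share the same cofactor).

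There is no real obstacle here: the statement is proved by the elementary cancellation exhibited above, the only point requiring a moment's care being to check that the factor $f_2^2$ in the definition is exactly what is needed to clear denominators so that the cofactor comes out polynomial. This computation is the prototype for the cofactor identities used later (cf. Proposition~\ref{prop_cof}).
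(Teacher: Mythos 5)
Your computation is correct and is exactly the straightforward verification the paper has in mind when it states that this classical proposition is "easily deduced" (no proof is given there): the factor $f_2^2$ clears the denominators and both $f_1$ and $f_2$ come out as Darboux polynomials with the common polynomial cofactor whose $j$-th entry is $\partial_{X_1}f_1\,\partial_{X_j}f_2-\partial_{X_j}f_1\,\partial_{X_1}f_2$. Your bonus observation that the cofactors coincide is also consistent with Proposition~\ref{cor_same_cof}, since $D_{f_1/f_2}(f_1/f_2)=0$.
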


\begin{Prop}\label{cor_same_cof}
$D_{f_1/f_2}(h_1/h_2)=0$ if and only if $h_1$ and $h_2$ are Darboux polynomials with the same cofactor.
\end{Prop}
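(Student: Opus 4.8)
The plan is to unwind both sides of the equivalence using the quotient rule for $D_{f_1/f_2}$ together with unique factorization in $\KK[X_1,\ldots,X_n]$. First I would observe that each of the $n-1$ components of $D_{f_1/f_2}$ is a genuine $\KK$-derivation of $\KK[\underline{X}]$: the $i$-th component sends $F$ to $\big(f_2^2\,\partial_{X_1}(f_1/f_2)\big)\partial_{X_i}F - \big(f_2^2\,\partial_{X_i}(f_1/f_2)\big)\partial_{X_1}F$, a $\KK[\underline{X}]$-linear combination of the derivations $\partial_{X_1}$ and $\partial_{X_i}$ with polynomial coefficients. Hence $D_{f_1/f_2}$ indeed maps $\KK[\underline{X}]$ into $\big(\KK[\underline{X}]\big)^{n-1}$ and its extension to $\KK(\underline{X})$ obeys the quotient rule componentwise, so that $D_{f_1/f_2}(h_1/h_2) = \big(D_{f_1/f_2}(h_1)\,h_2 - h_1\,D_{f_1/f_2}(h_2)\big)/h_2^2$. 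Consequently $D_{f_1/f_2}(h_1/h_2)=0$ is equivalent to the vector identity $D_{f_1/f_2}(h_1)\,h_2 = h_1\,D_{f_1/f_2}(h_2)$ in $\big(\KK[\underline{X}]\big)^{n-1}$.

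For the implication from ``Darboux with common cofactor'' to ``vanishing'', I would simply substitute: if $D_{f_1/f_2}(h_1)=h_1\mathcal{G}$ and $D_{f_1/f_2}(h_2)=h_2\mathcal{G}$ with the common cofactor $\mathcal{G}$, the identity becomes $h_1 h_2 \mathcal{G} = h_1 h_2 \mathcal{G}$, hence $D_{f_1/f_2}(h_1/h_2)=0$. For the converse, recall that $h_1/h_2$ is reduced, so $h_1$ and $h_2$ are coprime in the UFD $\KK[\underline{X}]$. Reading the identity $D_{f_1/f_2}(h_1)\,h_2 = h_1\,D_{f_1/f_2}(h_2)$ component by component, $h_1$ divides $[D_{f_1/f_2}(h_1)]_i\,h_2$, and coprimality forces $h_1 \mid [D_{f_1/f_2}(h_1)]_i$ for each $i$; thus $D_{f_1/f_2}(h_1)=h_1\mathcal{G}_1$ for some $\mathcal{G}_1 \in \big(\KK[\underline{X}]\big)^{n-1}$, i.e. $h_1$ is a Darboux polynomial, and symmetrically $D_{f_1/f_2}(h_2)=h_2\mathcal{G}_2$. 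Plugging both expressions back into the identity gives $h_1 h_2 \mathcal{G}_1 = h_1 h_2 \mathcal{G}_2$, and cancelling the nonzero factor $h_1 h_2$ in the integral domain $\KK[\underline{X}]$ yields $\mathcal{G}_1 = \mathcal{G}_2$.

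There is no real obstacle here — this is why the statement is called classical — but the one point that must not be skipped is the coprimality of $h_1$ and $h_2$, which is precisely the standing convention that all rational functions are reduced; without it the divisibility step, and hence the existence of the cofactors, would fail.
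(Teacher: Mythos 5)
Your proof is correct and is exactly the routine verification the paper leaves out: it states this proposition as one of the "classical" facts deduced easily and gives no proof, and the standard argument is precisely yours — quotient rule for the extended derivation, then coprimality of $h_1$ and $h_2$ (the paper's standing convention that rational functions are reduced) to extract the two cofactors and cancel $h_1h_2$. Nothing to add beyond the implicit assumption $h_1h_2\neq 0$, which is automatic in the paper's setting.
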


The following proposition is the main tool of our algorithm. Indeed, this proposition shows that  cofactors transform a product into a sum. Then thanks to the cofactors it will be possible to apply a kind of logarithmic derivative recombination scheme. 

\begin{Prop}\label{prop_cof}
Let $F \in \KK[X_1,\ldots,X_n]$ be a polynomial and let $F=F_1^{e_1}\cdots F_r^{e_r}$ be  its irreducible factorization in  $\KK[X_1,\ldots,X_n]$. Then:\\
$F$ is a Darboux polynomial with cofactor $\mathcal{G}_F$ if and only if all the $F_i$ are Darboux polynomials with cofactor $\mathcal{G}_{F_i}$. Furthermore, $\mathcal{G}_F=e_1 \mathcal{G}_{F_1}+\cdots + e_r\mathcal{G}_{F_r}$.
\end{Prop}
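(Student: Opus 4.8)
The plan is to prove both implications by exploiting the Leibniz rule, which is exactly what makes cofactors additive under products. For the harder direction, suppose all the $F_i$ are Darboux polynomials with cofactors $\mathcal{G}_{F_i}$, i.e. $D(F_i) = F_i \mathcal{G}_{F_i}$ for each $i$. First I would treat the case of a product of two Darboux polynomials $A$ and $B$ with cofactors $\mathcal{G}_A$ and $\mathcal{G}_B$: applying the Leibniz rule componentwise gives $D(AB) = D(A)B + A D(B) = (A\mathcal{G}_A) B + A (B \mathcal{G}_B) = AB(\mathcal{G}_A + \mathcal{G}_B)$, so $AB$ is Darboux with cofactor $\mathcal{G}_A + \mathcal{G}_B$. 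Next I would extend this to powers by induction: $D(F_i^{e_i}) = e_i F_i^{e_i - 1} D(F_i) = e_i F_i^{e_i} \mathcal{G}_{F_i}$, so $F_i^{e_i}$ is Darboux with cofactor $e_i \mathcal{G}_{F_i}$. Finally, iterating the two-factor case over $i = 1, \dots, r$ on the product $F = F_1^{e_1} \cdots F_r^{e_r}$ shows that $F$ is Darboux with cofactor $\mathcal{G}_F = e_1 \mathcal{G}_{F_1} + \cdots + e_r \mathcal{G}_{F_r}$, which also gives the asserted formula.

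For the converse, suppose $F$ is a Darboux polynomial with cofactor $\mathcal{G}_F$, i.e. $D(F) = F\mathcal{G}_F$. Fix an index $i$. Write $F = F_i^{e_i} \cdot Q$ where $Q = \prod_{j \neq i} F_j^{e_j}$ is coprime to $F_i$. Applying the Leibniz rule, $D(F) = e_i F_i^{e_i - 1} D(F_i) \cdot Q + F_i^{e_i} D(Q)$. Since $D(F) = F \mathcal{G}_F = F_i^{e_i} Q \mathcal{G}_F$, dividing through by $F_i^{e_i - 1}$ (working in the UFD $\KK[\underline{X}]$, or in its fraction field) yields $e_i D(F_i) Q + F_i D(Q) = F_i Q \mathcal{G}_F$, hence $e_i Q \, D(F_i) = F_i \big( Q \mathcal{G}_F - D(Q) \big)$. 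Since $F_i$ is irreducible and coprime to $Q$, and $e_i \neq 0$ (characteristic zero), $F_i$ must divide each component of $D(F_i)$; that is, $D(F_i) = F_i \mathcal{G}_{F_i}$ for some $\mathcal{G}_{F_i} \in (\KK[\underline{X}])^m$, so $F_i$ is a Darboux polynomial. The formula $\mathcal{G}_F = e_1 \mathcal{G}_{F_1} + \cdots + e_r \mathcal{G}_{F_r}$ then follows from the forward direction already established.

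The only genuinely delicate point is the divisibility argument in the converse: one must be careful that $F_i$ divides the \emph{polynomial} $D(F_i)$ componentwise and not merely a rational expression, which is where irreducibility of $F_i$, coprimality with $Q$, and unique factorization in $\KK[\underline{X}]$ are all used, together with $e_i$ being invertible. Everything else is a routine bookkeeping consequence of the Leibniz rule applied componentwise to the $m$-tuple derivation $D$.
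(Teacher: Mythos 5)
Your proof is correct. The paper itself gives no argument here, deferring instead to Lemma 8.3 of \cite{Dumortier_Llibre_Artes}; what you write is precisely the standard proof of that lemma (Leibniz rule componentwise for the forward direction and the additivity formula, and the irreducibility/coprimality divisibility argument, valid since $e_i\neq 0$ in characteristic zero, for the converse), with the only implicit point being that the cofactor of $F$ is unique because $\KK[\underline{X}]$ is an integral domain, which makes the final identification $\mathcal{G}_F=e_1\mathcal{G}_{F_1}+\cdots+e_r\mathcal{G}_{F_r}$ legitimate.
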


\begin{proof}
See for example Lemma 8.3 page 216 in \cite{Dumortier_Llibre_Artes}.
\end{proof}

\section{Reduction to a rational factorization problem}\label{sec_reduc}
In this section, we recall how the decomposition problem can be reduced to a factorization problem. Furthermore, we show that we can reduce our problem to a situation where $f_1$ and $f_2$ are squarefree. First, we recall some useful lemmas.

\begin{Lem}\label{Cor_f+gsquarefree}
If $f_1/f_2$ is reduced in $\KK(X_1,\dots,X_n)$, where $n\geq 1$ and $\Lambda$ is a  variable, then $f_1+\Lambda f_2$ is squarefree.
\end{Lem}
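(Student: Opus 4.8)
The plan is to work over the algebraic closure and the field of fractions, treating $\Lambda$ as a transcendental. Let $\LL = \overline{\KK(X_1,\dots,X_{n-1})}$ be the algebraic closure of the field of rational functions in the first $n-1$ variables, so that $f_1, f_2 \in \LL[X_n]$ after clearing any content issues, and set $g(\Lambda) = f_1 + \Lambda f_2 \in \LL[\Lambda][X_n]$. A polynomial in several variables is squarefree if and only if it has no repeated factor; since $\LL[\Lambda]$ is a UFD (and a polynomial ring over it), it suffices to show $g$ has no repeated irreducible factor. I would first reduce to the case $n=1$ by the standard specialization argument: if $f_1 + \Lambda f_2$ had a square factor over $\KK(X_1,\dots,X_n)[\Lambda]$, then by Bertini–Noether type specialization (or just a direct degree count) we could assume the variables $X_1,\dots,X_{n-1}$ are present only as parameters; more cleanly, I would argue that $f_1/f_2$ reduced over $\KK(X_1,\dots,X_n)$ forces $\GCD(f_1,f_2)=1$ in $\KK[X_1,\dots,X_n]$, and this gcd condition is exactly what we need.

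The core argument: suppose for contradiction that $p^2 \mid f_1 + \Lambda f_2$ for some irreducible $p \in \KK[X_1,\dots,X_n,\Lambda]$. Write $f_1 + \Lambda f_2 = p^2 q$. Viewing everything as polynomials in $\Lambda$ of degree at most $1$ (coefficients in $\KK[\underline X]$), I would differentiate with respect to $\Lambda$: $\partial_\Lambda(f_1 + \Lambda f_2) = f_2$, and since $p^2 \mid f_1 + \Lambda f_2$ we get $p \mid \partial_\Lambda(f_1+\Lambda f_2) = f_2$, hence $p \mid f_2$. Then from $f_1 + \Lambda f_2 = p^2 q$ and $p \mid f_2$ we obtain $p \mid f_1$ as well. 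But $p$ is a nonconstant polynomial (if $p \in \KK[\underline X]\setminus\KK$) dividing both $f_1$ and $f_2$, contradicting that $f_1/f_2$ is reduced. The remaining case is when $p$ genuinely involves $\Lambda$: then $\deg_\Lambda p \geq 1$, so $\deg_\Lambda(p^2 q) \geq 2$, contradicting $\deg_\Lambda(f_1 + \Lambda f_2) = 1$ (note $f_1, f_2$ do not involve $\Lambda$).

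The main obstacle — really the only subtle point — is handling the content/primitive-part bookkeeping correctly: $p$ could a priori be a polynomial purely in $\Lambda$ times a polynomial purely in $\underline X$, or one must make sure that "$p \mid \partial_\Lambda(f_1+\Lambda f_2)$" is applied in the right ring. This is dispatched by the degree-in-$\Lambda$ observation above, which immediately kills any factor involving $\Lambda$, so $p$ must lie in $\KK[\underline X]$, and then the reduced hypothesis finishes it. I would present the two cases (whether or not $p$ involves $\Lambda$) explicitly and keep the argument entirely elementary, with no appeal to Bertini beyond the trivial degree count.
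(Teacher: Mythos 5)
Your core argument is correct and complete: writing $f_1+\Lambda f_2=p^2q$ with $p$ irreducible in $\KK[\underline{X},\Lambda]$, the degree count $\deg_\Lambda(p^2q)\geq 2>1$ eliminates any $p$ involving $\Lambda$, and for $p\in\KK[\underline{X}]$ the relations $p\mid\partial_\Lambda(f_1+\Lambda f_2)=f_2$ and $p\mid(f_1+\Lambda f_2)-\Lambda f_2=f_1$ contradict the coprimality of $f_1$ and $f_2$, which is exactly what ``$f_1/f_2$ reduced'' means in this paper. Note that the paper itself contains no proof of Lemma \ref{Cor_f+gsquarefree}: it only recalls the statement, the argument being left to \cite{Cheze2010}; so there is no in-text proof to compare with, and your elementary two-case argument is a perfectly adequate self-contained substitute (in characteristic zero it also gives squarefreeness after extension of $\KK$, since coprimality of $f_1,f_2$ persists under field extensions). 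Two remarks. First, the opening paragraph --- the detour through $\LL=\overline{\KK(X_1,\dots,X_{n-1})}$ and a Bertini--Noether specialization --- is never used by the actual proof and should simply be removed; all that is needed from it is the observation that reducedness means $\GCD(f_1,f_2)=1$ in $\KK[\underline{X}]$. Second, your argument can be compressed further: viewing $f_1+\Lambda f_2$ as a degree-one polynomial in $\Lambda$ over the UFD $\KK[\underline{X}]$, its content is $\GCD(f_1,f_2)=1$, so it is primitive and of degree one in $\Lambda$, hence irreducible in $\KK[\underline{X},\Lambda]$ by Gauss's lemma --- squarefreeness is then immediate and the case distinction disappears.
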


\begin{Lem}\label{Lem_Bodin}
Let $h=h_1/h_2$ be a rational function in $\KK(\underline{X})$, $u=u_1/u_2$ a  rational function in $\KK(T)$ and set $f=u \circ h$ with $f=f_1/f_2 \in \KK(\underline{X})$. 
For all $\lambda \in \KK$ such that $\deg (u_1 - \lambda u_2) = \deg u$, we have
$$f_1 - \lambda f_2= e (h_1 - t_1 h_2)\cdots (h_1 -t_k h_2)$$
where $e \in \KK$, $k= \deg u$ and $t_i$ are the roots of the univariate polynomial $u_1(T) - \lambda u_2(T)$. 
\end{Lem}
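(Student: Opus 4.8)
The plan is to reduce the multivariate identity to a purely univariate fact about the pencil $u_1(T) - \lambda u_2(T)$ and then substitute $h = h_1/h_2$. First I would write $u = u_1/u_2$ in lowest terms with $k = \deg u = \max(\deg u_1, \deg u_2)$, and observe that for $\lambda$ such that $\deg(u_1 - \lambda u_2) = k$, the univariate polynomial $u_1(T) - \lambda u_2(T)$ has exactly $k$ roots $t_1, \dots, t_k$ in $\overline{\KK}$ (counted with multiplicity), so that
$$u_1(T) - \lambda u_2(T) = c \prod_{i=1}^{k} (T - t_i)$$
for some constant $c \in \KK$. The point of the degree hypothesis is precisely that no root is "lost at infinity": the factorization has the full expected number of linear factors.

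Next I would substitute $T = h = h_1/h_2$ into this identity. Since $f = u \circ h = u_1(h)/u_2(h)$, I need to relate $f_1 - \lambda f_2$ to $u_1(h) - \lambda u_2(h)$. Clearing denominators, $u_1(h_1/h_2) = \widetilde{u_1}(h_1,h_2)/h_2^{\deg u_1}$ where $\widetilde{u_1}$ is the homogenization, and similarly for $u_2$; after putting $u_1(h)/u_2(h)$ over a common denominator and reducing, one gets $f_1/f_2$ with $f_1$ and $f_2$ proportional to the homogenized numerator and denominator (up to a common polynomial factor and a field constant, using that $f_1/f_2$ is reduced). Carrying the constant $e \in \KK$ to absorb these proportionality factors, the substituted identity becomes
$$f_1 - \lambda f_2 = e \prod_{i=1}^{k} (h_1 - t_i h_2),$$
since $h_2^{\deg u_1}$ (or the appropriate power, using $\deg u_1 = k$ or $\deg u_2 = k$) distributes over the $k$ linear factors as $h_2 \cdot (h_1/h_2 - t_i)$. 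I would be slightly careful here to track which of $\deg u_1, \deg u_2$ equals $k$, but since at least one does and the degree hypothesis on $\lambda$ forces the numerator after substitution to have the right total degree, the bookkeeping works out; the homogeneity of the degree-$k$ forms makes the power of $h_2$ come out exactly right.

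The main obstacle I anticipate is the careful handling of the denominators and the reduction step: a priori $f_1/f_2 = u_1(h)/u_2(h)$ only up to cancellation of a common factor, so one must argue that this common factor can be absorbed into the constant $e$ — i.e., that no genuine polynomial cancellation (beyond a scalar) occurs in the numerator. This should follow from the fact that $f_1/f_2$ is assumed reduced together with a coprimality/content argument, or alternatively one can simply define $e$ to be whatever scalar makes the identity hold once one knows the two sides agree as rational functions up to a constant — and they agree as rational functions because dividing the claimed identity for $f_1 - \lambda f_2$ by the corresponding identity at a second value $\lambda'$ recovers, via the pencil, exactly the relation $f = u(h)$. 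I would present the clean version: establish the univariate factorization, substitute, clear denominators using homogeneity, and identify $e \in \KK$ as the resulting scalar, invoking that both sides are polynomials of the same degree agreeing up to the reduced-fraction normalization.
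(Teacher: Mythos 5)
Your plan is correct and is essentially the argument behind the paper's proof, which is given only as a citation to \cite[Lemma 8, Lemma 39]{Cheze2010}: factor the pencil $u_1-\lambda u_2$ into exactly $k$ linear factors (the hypothesis $\deg(u_1-\lambda u_2)=\deg u$ ruling out a root lost at infinity), substitute $T=h_1/h_2$, and clear denominators by homogenization. The only point to state precisely is that the ``no cancellation'' step rests on the coprimality of $u_1,u_2$ and of $h_1,h_2$ (for $\alpha\neq\beta$ the polynomials $h_1-\alpha h_2$ and $h_1-\beta h_2$ are coprime, and each is coprime to $h_2$), which makes the homogenized numerator and denominator of $u_1(h)/u_2(h)$ coprime, so that reducedness of $f_1/f_2$ identifies $f_1$ and $f_2$ with them up to a single constant of $\KK$; the alternative ``divide by the identity at $\lambda'$'' does not bypass this coprimality argument.
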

\begin{proof}
See \cite[Lemma 8, Lemma 39]{Cheze2010}.
\end{proof}

\begin{Rem}\label{rem:K}
 If $\lambda =f_1(\underline{a})/f_2(\underline{a})$, where $\underline{a}=(a_1,\ldots,a_n) \in \KK^n$, then we can suppose that $t_1 \in \KK$. Indeed, $t_1=h_1(\underline{a})/h_2(\underline{a})  \in \KK$.
\end{Rem}

The following lemma says that we can always suppose that $\deg u_1=\deg u_2 =\deg u$.

\begin{Lem}\label{degu1=degu}
Let $h=h_1/h_2$ be a rational function in $\KK(\underline{X})$, $u=u_1/u_2$ a  rational function in $\KK(T)$ and set $f=u \circ h$ with $f=f_1/f_2 \in \KK(\underline{X})$. There exists an homography $H(T)=(aT+b)/(\alpha T + \beta) \in \KK(T)$ such that:\\
$u \circ H= \tilde{u}_1/\tilde{u}_2$, $\deg \tilde{u}_1= \deg \tilde{u}_2$, and $f=\dfrac{\tilde{u}_1}{\tilde{u}_2} \circ \tilde{h}$, where $\tilde{h}=H^{-1} \circ h$ and $H^{-1}$ is the inverse of $H$ for the composition.
\end{Lem}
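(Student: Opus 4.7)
The plan is to exploit the fact that composing $u$ with a homography preserves its degree as a rational map $\PP^1 \to \PP^1$, while the condition $\deg \tilde{u}_1 = \deg \tilde{u}_2 = \deg u$ is exactly the statement that $u \circ H$ has neither a zero nor a pole at $T = \infty$. If $\deg u_1 = \deg u_2$ already, I would take $H = \mathrm{id}$ and be done. Otherwise, since $\KK$ is an infinite field (characteristic zero) and $u_1 u_2$ has only finitely many roots in $\KK$, I would pick $c \in \KK$ with $u_1(c) \neq 0$ and $u_2(c) \neq 0$, and set $H(T) := c + 1/T = (cT + 1)/T$. Its compositional inverse $H^{-1}(T) = 1/(T - c)$ is again a genuine homography.

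Next I would substitute $X = H(T)$ into $u_i(X) = \sum_{k=0}^{d_i} c_k^{(i)} X^k$ (with $d_i = \deg u_i$) and clear denominators by $T^{d_i}$, producing the polynomial
\[
P_i(T) \;=\; \sum_{k=0}^{d_i} c_k^{(i)}\,(cT+1)^k\, T^{d_i - k}.
\]
A direct inspection of the top and bottom coefficients gives $\deg P_i = d_i$ with leading coefficient equal to $u_i(c) \neq 0$, and $P_i(0) = c_{d_i}^{(i)} \neq 0$. Writing $d = \deg u$ and assuming (without loss of generality) $d_1 \geq d_2$, this yields
\[
u\bigl(H(T)\bigr) \;=\; \frac{P_1(T)}{P_2(T)\, T^{d_1 - d_2}},
\]
in which both the numerator and the denominator have degree $d$.

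The critical step is to verify that this representation is already reduced. Any common root $t_0 \in \overline{\KK}$ of $P_1$ and $P_2\, T^{d_1-d_2}$ satisfies $t_0 \neq 0$ (since $P_1(0) \neq 0$) and would therefore force $H(t_0)$ to be a common root of $u_1$ and $u_2$, contradicting the reducedness of $u$. Hence $\gcd\bigl(P_1,\, P_2\, T^{d_1-d_2}\bigr) = 1$, so one may set $\tilde{u}_1 := P_1$ and $\tilde{u}_2 := P_2\, T^{d_1 - d_2}$ to obtain a reduced form with $\deg \tilde{u}_1 = \deg \tilde{u}_2 = d$. Finally, setting $\tilde{h} := H^{-1} \circ h$, the required identity $f = u \circ h = (u \circ H) \circ (H^{-1} \circ h) = (\tilde{u}_1/\tilde{u}_2) \circ \tilde{h}$ is immediate from associativity of composition. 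The main obstacle is precisely the coprimality check, since a careless substitution could otherwise introduce spurious common factors and drop one of the degrees below $d$.
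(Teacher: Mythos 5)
Your proof is correct, and it rests on the same key idea as the paper's proof: precompose $u$ with a homography $H$ chosen so that $H(\infty)$ is neither a zero nor a pole of $u$, which forces the numerator and denominator of $u\circ H$ to both reach degree $\deg u$. The difference lies in the execution. The paper takes a generic $H(T)=(aT+b)/(\alpha T+\beta)$, factors $u_1$ and $u_2$ into linear factors over $\overline{\KK}$, and reads off the degrees from the products $\prod_i\big(aT+b-\lambda_i(\alpha T+\beta)\big)$ under the genericity conditions $\alpha\neq 0$, $a-\lambda_i\alpha\neq 0$, $a-\mu_i\alpha\neq 0$, without exhibiting a concrete admissible choice or discussing reducedness. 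You instead fix the explicit homography $H(T)=c+1/T$ with $c\in\KK$ avoiding the finitely many roots of $u_1u_2$ (possible since $\KK$ is infinite in characteristic zero), compute the leading and constant coefficients of the cleared numerators $P_i$ directly over $\KK$, and in addition verify that $P_1$ and $P_2T^{\,d_1-d_2}$ are coprime, so your representation with $\deg\tilde u_1=\deg\tilde u_2=\deg u$ is already reduced — a point the paper leaves implicit even though reducedness of $u$ is what Proposition \ref{key_prop} later assumes. Your argument is thus more elementary (no passage to $\overline{\KK}$) and slightly more complete; the paper's is shorter. The ``without loss of generality $d_1\ge d_2$'' step is harmless, since for $d_2>d_1$ the mirror computation works with $P_2(0)\neq 0$ playing the role of $P_1(0)\neq 0$.
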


\begin{proof}
If $\deg u_1 =\deg u_2$ then we set $H(T)=T$.\\
If $\deg u_2 > \deg u_1$ then we have:
$$\dfrac{u_1}{u_2}\big(H(T)\big)=\dfrac{\prod_{i=1}^{\deg u_1} \big( aT+b - \lambda_i(\alpha T + \beta) \big)}{\prod_{i=1}^{\deg u_2} \big( aT+b - \mu_i(\alpha T + \beta) \big)}.(\alpha T + \beta)^{\deg u_2 - \deg u_1},$$
where $u_1(\lambda_i)=0$ and $u_2(\mu_i)=0$.\\
We set:
\begin{eqnarray*}
\tilde{u}_1(T)&=&(\alpha T + \beta)^{\deg u_2 - \deg u_1}.\prod_{i=1}^{\deg u_1} \big( aT+b - \lambda_i(\alpha T + \beta) \big)\\
&=&u_1\big( H(T) \big).(\alpha T + \beta) ^{\deg u_2} \in \KK[T]\\
\tilde{u}_2(T)&=& \prod_{i=1}^{\deg u_2} \big( aT+b - \mu_i(\alpha T + \beta) \big)\\
&=&u_2\big( H(T) \big).(\alpha T + \beta) ^{\deg u_2} \in \KK[T].
\end{eqnarray*}
If $a- \lambda_i \alpha \neq 0$, $\alpha \neq 0$, and $a -\mu_i \alpha \neq 0$ then we get $\deg \tilde{u}_1=\deg u_2=\deg \tilde{u}_2$.\\
To conclude the proof we just have to remark that $\deg H=1$, thus $H$ is invertible for the composition.
\end{proof}

In order to ease the recombination scheme we reduce our problem to a  situation where \emph{the rational function is squarefree, i.e. the numerator and the denominator  are squarefree}. The following algorithm shows that if $f_1$ or $f_2$ are not squarefree then we can compute an homography $U(T) \in \KK(T)$ such that $U(f_1/f_2)$ is squarefree. Furthermore, if we know a decomposition $U(f_1/f_2)=u(h)$ then we can easily deduce  a decomposition $f_1/f_2=U^{-1}\big(u(h)\big)$. We recall that $U$ is invertible for the composition because $\deg U=1$. Now, we describe an algorithm which computes a good homography.\\

\textbf{\textsf{Good homography}}\\
\textbf{Input:} $f=f_1/f_2 \in \KK(X_1,\dots,X_n)$ of degree $d$, such that (C) and (H) are satisfied and a finite subset $S$ of $\KK^n$ such that $|S|= 2d^2+2d$.\\
\textbf{Output:}
 $U(T)=(T-\lambda_a)/(T-\lambda_b)$ such that $U(f)$ is squarefree, $\lambda_a =f_1/f_2(\underline{a}), \lambda_b=f_1/f_2(\underline{b})$ where $\underline{a}, \underline{b} \in \KK^n$, $\lambda_a \neq \lambda_b$, and $\deg_{X_n}( f_1- \lambda_a f_2)=\deg_{X_n}( f_1 - \lambda_b f_2)=d$.\\

\begin{enumerate}
\item \label{comput_fbar}Compute $\overline{f}_1(X_n):=f_1(\underline{0},X_n)$, and $\overline{f}_2(X_n):=f_2(\underline{0},X_n)$.
\item Construct an empty list $L$.
\item \label{construc_L} For $i$ from 1 to $2d^2+2d$ do:
\begin{enumerate}
\item Compute $\overline{f}:=\overline{f}_1(i)/ \overline{f}_2(i)$,.
\item If $\overline{f} \not \in L$ then $L:=\textrm{concatenate}(L,[\overline{f}])$.
\end{enumerate}
\item Construct an empty list $\mathcal{L}$.
\item \label{step_res}For $k$ from 1 to $2d+2$ do:
\begin{enumerate}
\item Compute $R:=Res_{X_n} \Big( \overline{f}_1(X_n)-L[k] \overline{f}_2(X_n), \, \partial_{X_n} \overline{f}_1(X_n) - L[k]  \partial_{X_n} \overline{f}_2(X_n)\Big)$.
\item \label{step5b} If $R\neq 0$ and $\deg _{X_n}(\overline{f}_1 -L[k] \overline{f}_2)=d$, then $\mathcal{L}:=\textrm{concatenate}(\mathcal{L},[L[k]])$.
\end{enumerate}
\item $\lambda_a:=\mathcal{L}[1]$, $\lambda_b:=\mathcal{L}[2]$.
\item Return $U(T)=(T-\lambda_a)/(T-\lambda_b)$.
\end{enumerate}

\begin{Prop}\label{good_hom_correct}
The algorithm \textsf{Good homography} is correct.
\end{Prop}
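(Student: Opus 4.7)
The plan is to verify the four output requirements in sequence: well-definedness of $\lambda_a,\lambda_b$ (that is, $|\mathcal{L}|\geq 2$ after step~5); their expressibility as values of $f_1/f_2$ at $\KK$-rational points; their distinctness; and the two $X_n$-degree conditions together with squarefreeness of $U(f)$.

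I would begin with a counting argument to show $|L| \geq 2d+2$: each $c \in \KK$ is attained at most $d$ times as $\overline{f}_1(i)/\overline{f}_2(i)$, because $\overline{f}_1 - c\,\overline{f}_2 \in \KK[X_n]$ has degree at most $d$, and $\overline{f}_2$ has at most $d$ integer roots, so the $2d^2+2d$ evaluations produce at least $2d+2$ distinct finite values. Next I would bound the number of candidates rejected in step~5(b). The polynomial
\[
R(\Lambda) := Res_{X_n}\bigl(\overline{f}_1 - \Lambda \overline{f}_2,\ \partial_{X_n}(\overline{f}_1 - \Lambda \overline{f}_2)\bigr) \in \KK[\Lambda]
\]
is nonzero by hypothesis (H,ii) and has $\Lambda$-degree at most $2d-1$, while the $X_n^d$-coefficient of $\overline{f}_1 - \Lambda \overline{f}_2$, which by (H,i) coincides with that of $f_1 - \Lambda f_2$ and lies in $\KK[\Lambda]$, is linear in $\Lambda$ and so vanishes on at most one value of $\Lambda$. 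Hence at most $2d$ of the $2d+2$ candidates fail, giving $|\mathcal{L}| \geq 2$.

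Conditions (ii) and (iii) then follow directly: each $\lambda \in \mathcal{L}$ equals $f_1(\underline{0}, i)/f_2(\underline{0}, i) = f_1(\underline{a})/f_2(\underline{a})$ with $\underline{a} := (0,\dots,0,i) \in \KK^n$, and step~3 only appends new values to $L$. The equality $\deg_{X_n}(f_1 - \lambda f_2) = d$ for $\lambda \in \{\lambda_a,\lambda_b\}$ transfers from step~5(b) because the two $X_n^d$-coefficients agree in $\KK[\Lambda]$. For the squarefreeness of $U(f) = (f_1 - \lambda_a f_2)/(f_1 - \lambda_b f_2)$, I would argue by contradiction: if $f_1 - \lambda_a f_2 = g^2 k$ with $g$ non-constant, then its $X_n^d$-coefficient being a nonzero element of $\KK$ forces the leading $X_n$-coefficients of both $g$ and $k$ to be nonzero constants in $\KK$, and in particular $\deg_{X_n} g \geq 1$ (a factor purely in $\KK[X_1, \dots, X_{n-1}]$ would need to divide a nonzero constant). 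Specializing $\underline{X}' = \underline{0}$ then yields $g(\underline{0}, X_n)^2 \mid \overline{f}_1 - \lambda_a \overline{f}_2$ with $g(\underline{0}, X_n)$ non-constant, contradicting $R(\lambda_a) \neq 0$. The same argument handles $f_1 - \lambda_b f_2$. Any common factor of the two polynomials would divide $(\lambda_b-\lambda_a)f_2$ and hence $f_2$, and by subtraction $f_1$, forcing it to be constant since $\lambda_a \neq \lambda_b$ and $f_1/f_2$ is reduced.

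The delicate step is precisely this lifting from univariate squarefreeness of $\overline{f}_1 - \lambda \overline{f}_2 \in \KK[X_n]$ to multivariate squarefreeness of $f_1 - \lambda f_2 \in \KK[\underline{X}]$: the resultant test certifies only the specialization at $\underline{X}' = \underline{0}$, and without the coincidence of partial and total $X_n$-degree coming from (H,i) together with step~5(b), an irreducible factor of $f_1 - \lambda f_2$ could become constant or drop in $X_n$-degree upon this specialization and thereby escape detection.
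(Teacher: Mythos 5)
Your argument is correct and follows essentially the same route as the paper's own (much terser) proof: count distinct values in $L$ using $\deg f=d$, bound the rejected candidates via $\deg R(\Lambda)\le 2d-1$ plus the single possible degree drop, and deduce squarefreeness of $f_1-\lambda f_2$ from the nonvanishing resultant of the specialization at $\underline{X}'=\underline{0}$ together with the leading-coefficient condition coming from (H,$i$) and Step 5(b). You simply make explicit the details the paper leaves implicit, notably the lifting from univariate to multivariate squarefreeness and the coprimality of the two factors, which is a faithful elaboration rather than a different approach.
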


\begin{proof}
In Step \ref{construc_L} we construct a list with at least  $2d+2$ distinct elements because  $\deg(f)=d$.\\
By hypothesis (H), $R(\Lambda)\neq 0$ and by \cite[Theorem 6.22]{GG}, $\deg(R)\leq 2d-1$. Thus $\mathcal{L}$ contains at least two distincts elements.\\
As $R(\lambda_a)$ and $R(\lambda_b)$ are not equal to zero, and thanks to Step \ref{step5b} the condition on the degree is satisfied,  we deduce that $f_1-\lambda_af_2$ and $f_1 -\lambda_b f_2$ are squarefree.
\end{proof}

\begin{Prop}\label{complex_good_hom}
The algorithm \textsf{Good homography} can be performed  with at most $\bigOsoft(d^{n})$ arithmetic operations over $\KK$.
\end{Prop}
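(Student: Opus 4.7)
The plan is to bound each step of the algorithm separately and check that the total lies in $\bigOsoft(d^n)$; since $n\ge 2$, it will suffice to show $\bigOsoft(d^2)$.

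The bookkeeping steps (1, 2, 4, 6, 7) are cheap: extracting $\overline f_i(X_n)=f_i(\underline 0,X_n)$ from the dense representation of $f_i$ amounts to reading off $d+1$ coefficients, and the list initialisations and output are free, giving $\bigO(d)$ in total. Step 5 runs $2d+2$ iterations, each computing a resultant of two univariate polynomials of degree at most $d$ and checking a partial degree; using a fast resultant algorithm based on the half-gcd (see \cite[Chapter~11]{GG}) each iteration costs $\bigOsoft(d)$, so Step 5 totals $\bigOsoft(d^2)$.

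The main obstacle is Step 3, which as written performs up to $\bigO(d^2)$ evaluations and as many membership tests against a list of size up to $\bigO(d^2)$; a direct implementation with Horner evaluation and linear search would cost $\bigO(d^4)$, already exceeding $d^n$ at $n=2$. To avoid this the plan is to (i) replace the $2d^2+2d$ individual Horner evaluations by a single fast multi-point evaluation of $\overline f_1$ and $\overline f_2$ at the required integers, obtaining all values in $\bigOsoft(d^2)$ operations; (ii) form the $\bigO(d^2)$ ratios in $\bigO(d^2)$ further operations, treating separately the at most $d$ indices where $\overline f_2(i)=0$ (since $\deg \overline f_2\le d$); and (iii) drop the on-the-fly membership tests in favour of a post-processing pass that sorts the resulting array and then removes consecutive duplicates in a linear scan, which uses $\bigOsoft(d^2)$ comparisons and equality tests, each of unit cost in the complexity model recalled in the introduction.

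Summing the three contributions yields $\bigOsoft(d^2)$, which is contained in $\bigOsoft(d^n)$ for every $n\ge 2$, as required.
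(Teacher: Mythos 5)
Your proof is correct and follows essentially the same route as the paper's: fast multipoint evaluation for the evaluations in Step 3 ($\bigOsoft(d^2)$, citing the same kind of result from \cite{GG}) and fast univariate resultants for Step 5 ($\bigOsoft(d)$ per iteration, $\bigOsoft(d^2)$ in total), the only cosmetic difference being that the paper charges $\bigOsoft(d^n)$ for Step 1 via Horner where you simply read off the coefficients of $f_i(\underline{0},X_n)$. Your additional treatment of the duplicate-removal in Step 3 (sorting plus a linear scan instead of naive membership tests) addresses a cost the paper's proof passes over in silence; just be aware that sorting uses order comparisons on elements of $\KK$, which the stated model does not charge explicitly, so you should realize them, say, lexicographically on the rational coordinate vectors of elements of $\QQ[\alpha]$.
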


\begin{proof}
Step \ref{comput_fbar} can be done with $\bigOsoft(d^n)$ arithmetic operations with Horner's method.
In  Step \ref{construc_L} we use a fast multipoint  evaluation strategy, then we can perform this step with at most $\bigOsoft(d^2)$ arithmetic operations, see \cite[Corollary 10.8]{GG}. \\
In Step \ref{step_res}, the computation of the resultant can be done with $\bigOsoft(d)$ arithmetic operations, see \cite[Corollary 11.16]{GG}. Thus Step \ref{step_res} can be done with $\bigOsoft(d^{2})$ arithmetic operations.\\
In conclusion the  algorithm can be performed with the desired complexity.
\end{proof}

\begin{Rem}\label{Rem:alphainK}
Suppose $f_1/f_2=v_1/v_2(h)$. With the algorithm \textsf{Good homography} we can write $U(f_1/f_2)=u_1/u_2(h)$ with $u_1/u_2 \in \KK(T)$, $h  \in \KK(X_1,\ldots,X_n)$, and $u_1$ (resp. $u_2$) has a root $\alpha_1$ (resp. $\alpha_2$) in $\KK$. Indeed, we have $u_1=v_1 - \lambda_a v_2$ (resp. $u_2=v_1 - \lambda_b v_2$) and  $\lambda_a =f_1/f_2(\underline{a})$ (resp. $\lambda_b=f_1/f_2(\underline{b})$) then we deduce that $\alpha_1=h_1/h_2(\underline{a})$ (resp. $\alpha_2=h_1/h_2(\underline{b})$).
\end{Rem}


\section{The recombination method}
In this section we describe our recombination method. First, we introduce some notations. By Proposition \ref{rem_darboux}, $F_1$ and $F_2$ are Darboux polynomials of $D_{F_1/F_2}$.We denote by 
$$\mathcal{G}_{F_k}=(\mathcal{G}_{F_k}^{(2)},\ldots,\mathcal{G}_{F_k}^{(n)})$$ 
the cofactor of $F_k$, where $k=1,2$, and \mbox{$\mathcal{G}_{F_k}^{(l)} \in \KK[X_1,\ldots,X_n]$}. We set: 
$$F_k=\prod_{j=1}^{s_k} F_{k,j}$$
 for $k=1,2$, and 
$$\mathcal{G}_{F_{k,j}}=(\mathcal{G}_{F_{k,j}}^{(2)},\ldots,\mathcal{G}_{F_{k,j}}^{(n)}).$$
In $\QQ[\alpha][X_1,\ldots,X_n]$ polynomials are denoted in the following way:
$$\mathcal{P} =\sum_{|\tau| \leq d} \sum_{\epsilon=0}^{r-1} a_{\epsilon,\tau} \alpha^{\epsilon} X_1^{\tau_1}\cdots X_n^{\tau_n} \in \QQ[\alpha][X_1,\ldots,X_n],$$
where $\alpha$ is an algebraic number of degree $r$, $\tau=(\tau_1,\ldots,\tau_n)$, $|\tau|=\tau_1+\cdots+\tau_n$, and $a_{\epsilon,\tau} \in \QQ$. We set
$$\textrm{coef}\big(\mathcal{P},\alpha^{\epsilon}\underline{X}^{\tau}\big)=a_{\epsilon,\tau}.$$
Now we define the linear system $\mathcal{S}$:
$$\mathcal{S}:=\sum_{j=1}^{s_1} x_{1,j}\textrm{coef}\big(\mathcal{G}_{F_{1,j}}^{(l)},\alpha^{\epsilon}\underline{X}^{\tau}\big) - \sum_{j=1}^{s_2} x_{2,j}\textrm{coef}\big(\mathcal{G}_{F_{2,j}}^{(l)},\alpha^{\epsilon}\underline{X}^{\tau}\big)=0,$$
where  $|\tau| \leq d$, $0 \leq \epsilon \leq r-1$, and $2 \leq l\leq n$.\\
We denote by $\ker \mathcal{S}$ the kernel of this linear system, and we remark that 
$$x=(x_{1,1}, \ldots,x_{2,s_2}) \in \ker \mathcal{S} \iff
\sum_{j=1}^{s_1} x_{1,j} \mathcal{G}_{F_{1,j}} - \sum_{j=1}^{s_2} x_{2,j} \mathcal{G}_{F_{2,j}}=0.$$
We define the following maps:

\begin{eqnarray*}
\pi_1: \KK^{s_1+s_2}& \longrightarrow & \KK^{s_1} \\
(x_{1,1},\ldots,x_{2,s_2})& \longmapsto &(x_{1,1}, \ldots,x_{1,s_1})
\end{eqnarray*}
\begin{eqnarray*}
\pi_2: \KK^{s_1+s_2}& \longrightarrow & \KK^{s_2}\\
(x_{1,1},\ldots,x_{2,s_2})& \longmapsto &(x_{2,1}, \ldots,x_{2,s_2})
\end{eqnarray*}

The following proposition will be the key of our algorithm:
\begin{Prop}\label{key_prop}
Suppose that $F_1/F_2 \in \KK(X_1,\ldots,X_n)$ comes from the algorithm \textsf{Good Homography} and $F_1/F_2=u(h)$ where $h=h_1/h_2 \in \KK(X_1,\ldots,X_n)$ is a non-composite reduced rational function and $u=u_1/u_2 \in \KK(T)$ is a reduced rational function, with $\deg u_1= \deg u_2$.\\
We denote by $u_k=\prod_{i=1}^{t_k} u_{k,i}$ the factorization of $u_k$ in $\KK[T]$, where $k=1$, 2.\\
We denote by $F_k=\prod_{j=1}^{s_k}F_{k,j}$ the factorization of $F_k$ in $\KK[X_1,\ldots,X_n]$, where $k=1$, 2.\\
Then:
\begin{enumerate}
\item $\displaystyle u_{k,i}\Big(\dfrac{h_1}{h_2}\Big).h_2^{\deg u_{k,i}}= \prod_{j=1}^{s_k} F_{k,j}^{e_{k,i,j}} \in \KK[X_1,\ldots,X_n]$ and $e_{k,i,j} \in \{0,1\}$.\\
Furthermore, if we set $e_{k,i}:=(e_{k,i,1},\ldots, e_{k,i,s_{k}})$, then\\ the vectors $e_{k,i}$, $i=1,\ldots, t_k$, are orthogonal for the usal scalar product.
\item We have $e_{k,i} \in \pi_k(\ker \mathcal{S})$.
\item $\{e_{k,1}, \ldots, e_{k,t_k}\}$ is a basis of $\pi_k(\ker \mathcal{S})$.
\end{enumerate} 
\end{Prop}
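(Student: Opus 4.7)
\emph{Part (1).} First I would show that $F_k = \gamma_k\, u_k(h_1/h_2)\, h_2^{\deg u_k}$ for some $\gamma_k \in \KK^\times$. The hypothesis $F_1/F_2 = u_1(h_1/h_2)/u_2(h_1/h_2)$ together with $\deg u_1 = \deg u_2$ gives $F_1 P_2 = F_2 P_1$ with $P_k := u_k(h_1/h_2)\, h_2^{\deg u_k} \in \KK[\underline{X}]$; the coprimality of $u_1, u_2$ in $\KK[T]$ and of $h_1, h_2$ in $\KK[\underline{X}]$ makes the $\overline{\KK}$-linear factors $h_1 - \sigma h_2$ pairwise coprime for distinct $\sigma$ and coprime to $h_2$, so $P_1, P_2$ are coprime and unique factorization yields $F_k = \gamma_k P_k$. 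Squarefreeness of $F_k$ (ensured by Good Homography) then forces $u_k$ to be squarefree in $\KK[T]$. Substituting $u_k = \prod_i u_{k,i}$ and reapplying the same coprimality argument to distinct factors $u_{k,i}, u_{k,i'}$, the polynomials $u_{k,i}(h_1/h_2)\, h_2^{\deg u_{k,i}} \in \KK[\underline{X}]$ are pairwise coprime, so (up to a scalar) each is a subproduct of the $F_{k,j}$ with exponents in $\{0,1\}$; this defines $e_{k,i}$. Orthogonality is immediate since these subproducts reconstitute the squarefree polynomial $F_k$, so each $F_{k,j}$ belongs to a unique $u_{k,i}$-block.

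\emph{Part (2).} Applying Proposition \ref{Petrav} to $F_1/F_2 = u(h)$ with $v(T) = T$ gives $D_{F_1/F_2}(h_1/h_2) = 0$, so by Proposition \ref{cor_same_cof}, $h_1$ and $h_2$ share a common cofactor $\mathcal{G}_h$; a direct Leibniz calculation extends this to show that $h_1 - t h_2$ is Darboux with cofactor $\mathcal{G}_h$ for every $t \in \overline{\KK}$. By Proposition \ref{prop_cof} over $\overline{\KK}$, $u_{k,i}(h_1/h_2)\, h_2^{\deg u_{k,i}}$ has cofactor $\deg(u_{k,i})\, \mathcal{G}_h$; by Proposition \ref{prop_cof} over $\KK$ applied to the factorization from Part (1), the same cofactor equals $\sum_j e_{k,i,j}\, \mathcal{G}_{F_{k,j}}$. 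Summing over $i$ also yields $\sum_j \mathcal{G}_{F_{k,j}} = \deg(u_k)\, \mathcal{G}_h$. To exhibit $e_{1,i}$ as the $\pi_1$-image of an element of $\ker \mathcal{S}$, take $y \in \KK^{s_2}$ with every entry equal to $\deg(u_{1,i})/\deg(u_2)$; both sides of the equation defining $\mathcal{S}$ then equal $\deg(u_{1,i})\, \mathcal{G}_h$, so $(e_{1,i}, y) \in \ker \mathcal{S}$. The case $k = 2$ is symmetric.

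\emph{Part (3) and main obstacle.} Linear independence follows from pairwise orthogonality, since each $e_{k,i}$ is nonzero ($\deg u_{k,i} \geq 1$). For spanning, given $(x, y) \in \ker \mathcal{S}$ viewed over $\QQ$, I scale by a positive integer $N$ so that $Nx, Ny$ have integer entries and form the rational function $\phi := \prod_j F_{1,j}^{Nx_j}/\prod_j F_{2,j}^{Ny_j} \in \KK(\underline{X})$. The logarithmic derivative and the kernel condition yield $D_{F_1/F_2}(\phi) = 0$, so Proposition \ref{Petrav} gives $\phi = v(h)$ for some $v \in \KK(T)$. The principal obstacle is then to identify $v$ in terms of the $u_{k,i}$: matching the divisors of $\phi$ and $v(h)$ over $\overline{\KK}$ shows that every $\KK$-irreducible factor of $v$ has its roots among those of $u_1 u_2$, so (by squarefreeness of $u_k$) it is one of the $u_{k,i}$; and the absence of an $h_2$-component in the divisor of $\phi$, which reduces to $h_2 \nmid F_{k,j}$ (a consequence of the nonzero leading coefficient of $u_k$), forces the correct degree balance in $v$. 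Writing $v = \prod_i u_{1,i}^{\alpha_i}\, \prod_i u_{2,i}^{\beta_i}$ with $\alpha_i, \beta_i \in \ZZ$, substituting Part (1), and matching $F_{k,j}$-exponents produces $Nx_j = \sum_i \alpha_i\, e_{1,i,j}$, so $x \in \operatorname{span}_\QQ \{e_{1,i}\}$; the corresponding $\KK$-spanning follows by tensoring, since $\mathcal{S}$ has $\QQ$-coefficients.
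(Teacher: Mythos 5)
Your argument is correct, and for parts (1) and (3) it follows essentially the paper's route: part (1) re-derives, by a direct coprimality/unique-factorization argument, what the paper gets by invoking Lemma \ref{Lem_Bodin} with $\lambda=0$ plus squarefreeness of $F_k$; part (3) is the paper's proof in different bookkeeping (you pass from a kernel vector to the rational function $\phi$, use $D_{F_1/F_2}(\phi)=0$ and non-compositeness of $h$ to write $\phi=v(h)$, then the same ``distinct members of the pencil $h_1-\mu h_2$ share no component since $h_1/h_2$ is reduced'' argument to force the roots of $v$ into those of $u_1u_2$ and to kill the stray $h_2$-power, and finally match exponents; the paper writes $v$ via index sets $I_{\textrm{num}},I_{\textrm{den}}$ and multiplicities $m_{u_{k,i}}$ where you use integer exponents $\alpha_i,\beta_i$, which is equivalent). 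The genuine difference is in part (2). The paper exploits the \textsf{Good Homography} hypothesis via Remark \ref{Rem:alphainK}: $u_2$ has a linear factor $u_{2,1}=T-\alpha_2$ over $\KK$, and the kernel element exhibited is $(e_{1,i},\deg(u_{1,i})\,e_{2,1})$, obtained by applying Proposition \ref{Petrav} to $u_{1,i}/u_{2,1}^{\deg u_{1,i}}(h)$ and then Proposition \ref{prop_cof}. You instead show, via Propositions \ref{Petrav} and \ref{cor_same_cof}, that $h_1,h_2$ (hence every $h_1-th_2$) share a common cofactor $\mathcal{G}_h$, deduce $\sum_j e_{k,i,j}\mathcal{G}_{F_{k,j}}=\deg(u_{k,i})\mathcal{G}_h$ and $\sum_j\mathcal{G}_{F_{2,j}}=\deg(u_2)\mathcal{G}_h$, and pair $e_{1,i}$ with the constant vector with entries $\deg(u_{1,i})/\deg(u_2)$. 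This is a valid alternative; it buys you independence from the rational-root property supplied by \textsf{Good Homography} (only squarefreeness of $F_1,F_2$ and $\deg u_1=\deg u_2$ are used), at the price of working with the auxiliary cofactor $\mathcal{G}_h$, whereas the paper's choice keeps all kernel coordinates integral and stays entirely inside the data produced by the algorithm. Two small looseness points, at the same level of rigor as the paper's own text: the step ``$\phi=v(h)$ for some $v\in\KK(T)$'' uses that for non-composite $h$ the functions annihilated by $D_{F_1/F_2}$ lie in $\KK(h)$ (a slight strengthening of Proposition \ref{Petrav} as stated, which the paper also uses without comment), and your appeal to Proposition \ref{prop_cof} over $\overline{\KK}$ really only needs the elementary ``product of Darboux polynomials is Darboux with summed cofactors,'' since the $h_1-th_2$ need not be irreducible.
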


\begin{proof}
\begin{enumerate}
\item By Lemma \ref{Lem_Bodin} applied to $F_1/F_2$ (resp. $F_2/F_1$) with $\lambda=0$,  we get 
$$F_k=u_k(h_1/h_2).h_2^{\deg u_k}= \prod_{i=1}^{t_k} u_{k,i}(h_1/h_2). h_2^{\deg u_{k,i}}.$$
Then we deduce 
$$u_{k,i}(h_1/h_2).h_2^{\deg u_{k,i}} = \prod_{j=1}^{s_k} F_{k,j}^{e_{k,i,j}} \textrm{ in } \KK[X_1, \ldots, X_n]$$ 
with $e_{k,i,j} \in \{0,1\}$ because $F_k$ are squarefree. Furthermore, the vectors  $e_{k,i}$ are orthogonal for the usual scalar product because $F_k$ are squarefree.
\item We show this item for $k=1$, the case $k=2$ can be proved in a similar way.\\
As $F_1/F_2$ comes from the algorithm \textsf{Good Homography} and as explained in Remark \ref{Rem:alphainK} we can suppose that:
$$u_{k,1}(T)=(T-\alpha_k), \textrm{ with } \alpha_k \in \KK.$$
The previous item allows us to write:
$$\dfrac{u_{1,i}}{u_{2,1}^{\deg u_{1,i}}}\Bigg(\dfrac{h_1}{h_2} \Bigg)= 
\dfrac{\Big(\prod_{j=1}^{s_1} F_{1,j}^{e_{1,i,j}}\Big).\Big(h_2\Big)^{\deg u_{1,i}}}{  \Big(\prod_{j=1}^{s_2} F_{2,j}^{e_{2,1,j}}\Big)^{\deg u_{1,i}}.\Big(h_2\Big)^{\deg u_{1,i}}}=
\dfrac{\prod_{j=1}^{s_1} F_{1,j}^{e_{1,i,j}}}{\Big(\prod_{j=1}^{s_2} F_{2,j}^{e_{2,1,j}}\Big)^{\deg u_{1,i}}}.$$
By Proposition \ref{Petrav} applied to $\dfrac{u_{1,i}}{u_{2,1}^{\deg u_{1,i}}}\Big(\dfrac{h_1}{h_2} \Big)$, we get then:
$$D_{F_1/F_2} \Bigg( \dfrac{\prod_{j=1}^{s_1} F_{1,j}^{e_{1,i,j}}}{\prod_{j=1}^{s_2} F_{2,j}^{e_{2,1,j}. \deg u_{1,i}}} \Bigg)=0.$$
Now, we recall that $F_{k,j}$ are Darboux polynomials, see Proposition \ref{rem_darboux} and Proposition \ref{prop_cof}. Then by Proposition \ref{prop_cof}, we deduce 
$$\sum_{j=1}^{s_1} e_{1,i,j} \mathcal{G}_{F_{1,j}}-\deg(u_{1,i}) \sum_{j=1}^{s_2} e_{2,1,j} \mathcal{G}_{F_{2,j}}=0.$$
It follows $(e_{1,i,1},\ldots,e_{1,i,s_1},\deg(u_{1,i}).e_{2,1,1}, \ldots,\deg(u_{1,i}).e_{2,1,s_2}) \in \ker \mathcal{S}$. Thus, $e_{1,i} \in \pi_1(\ker \mathcal{S})$.
\item The vectors  $e_{k,1},\ldots, e_{k,t_k}$ are linearly independant because they are orthogonal. We just have to prove that these vectors generate $\pi_{k} (\ker \mathcal{S})$.\\
Suppose that $\rho=(\rho_1,\ldots,\rho_{s_1+s_2}) \in \ker \mathcal{S}$. First, we clear the denominators and  we  suppose that $\rho \in \ZZ^{s_1+s_2}$ instead of $\QQ^{s_1+s_2}$.\\
In a first time  we explain the strategy of the proof for this item, and in a second time  we will detail the proof.\\
We set $$\dfrac{\mathcal{F}_1}{\mathcal{F}_2}=\dfrac{\prod_{i=1}^{s_1} F_{1,j}^{\rho_j}}{ \prod_{j=1}^{s_2} F_{2,j}^{\rho_{s_1+j}}},$$
 where  $\mathcal{F}_1, \mathcal{F}_2 \in \KK[\underline{X}]$ and $\mathcal{F}_1/\mathcal{F}_2$ is a reduced rational function.\\

 Our goal is to get this kind of equality:
\begin{equation}\nonumber
(\mathcal{E}), \, \,\dfrac{\mathcal{F}_1}{\mathcal{F}_2}=
\dfrac{\prod_{j=1}^{s_1}F_{1,j}^{\rho_j}}{\prod_{j=1}^{s_2}F_{2,j}^{\rho_{s_1+j}}}=
\dfrac{ \prod_{k=1}^2 \prod_{(i,k) \in I_{\textrm{num}}} \Big( \prod_{j=1}^k F_{k,j}^{e_{k,i,j}}   \Big)^{m_{u_{k,i}}}  }{ \prod_{k=1}^2 \prod_{(i,k) \in I_{\textrm{den}}} \Big( \prod_{j=1}^k F_{k,j}^{e_{k,i,j}}   \Big)^{m_{u_{k,i}}}     } ,
\end{equation}
where $m_{u_{k,i}} \in \NN$, $I=\{(1,1),\ldots,(t_1,1),(1,2),\ldots,(t_2,2)\}$, $I_{\textrm{num}} \subset I$, $I_{\textrm{den}} \subset I$ and $I_{\textrm{num}}\cap I_{\textrm{den}}=\emptyset$.\\

By the unicity of the factorization in irreducible factors we deduce:
$$\pi_1(\rho)=\sum_{(i,1) \in I_{\textrm{num}}} m_{u_{1,i}} e_{1,i} - \sum_{(i,1) \in I_{\textrm{den}}} m_{u_{1,i}} e_{1,i},$$     
$$\pi_2(\rho)=\sum_{(i,2) \in I_{\textrm{num}}} m_{u_{2,i}} e_{2,i} - \sum_{(i,2) \in I_{\textrm{den}}} m_{u_{2,i}} e_{2,i}.$$

We get: $\{e_{k,1}, \ldots, e_{k,t_{k}} \}$ generates $\pi_k( \ker \mathcal{S})$, and this is the desired result.\\

Now we detail the proof with four steps:

\begin{enumerate}
\item \label{starstar} We remark: $$ \dfrac{F_1}{F_2}=\dfrac{u_1}{u_2}(h)=\dfrac{ \prod_{i=1}^{\deg u} (h_1 - \mu_{1,i} h_2)}{\prod_{i=1}^{\deg u} (h_1 - \mu_{2,i} h_2)},$$
where $\mu_{k,i}$ are roots of $u_{k}$.\\
\item  \label{star} We have:
$$ \dfrac{   \mathcal{F}_1 }{ \mathcal{F}_2 }=\dfrac{ \prod_{j=1}^{d_1}(h_1 -\lambda_j h_2)^{m_j} }{ \prod_{j=d_1+1}^{d_1+d_2}(h_1 -\lambda_j h_2)^{m_j}}.h_2^{\kappa}, \textrm{ with } \kappa \in \ZZ, m_j \in \NN.$$

Indeed, as $\rho \in \ker \mathcal{S}$, we have 
$$\sum_{j=1}^{s_1} \rho_j \mathcal{G}_{F_{1,j}}-\sum_{j=1}^{s_2} \rho_{s_1+j} \mathcal{G}_{F_{2,j}}=0.$$
Thus $\prod_{j=1}^{s_1} F_{1,j}^{\rho_j}$ and $\prod_{j=1}^{s_2}F_{2,j}^{\rho_{s_1+j}}$ are Darboux polynomials with the same cofactor. By Proposition \ref{cor_same_cof}, we deduce:
$$D_{F_1/F_2}\Bigg(\dfrac{\prod_{j=1}^{s_1} F_{1,j}^{\rho_j}}{ \prod_{j=1}^{s_2} F_{2,j}^{\rho_{s_1+j}}}\Bigg)=0.$$
Then $D_{F_1/F_2}(\mathcal{F}_1/\mathcal{F}_2)=0$ and thus $\mathcal{F}_1/\mathcal{F}_2=v_1/v_2(h)$ by Proposition \ref{Petrav}.
We denote by $\lambda_j$ the roots of $v_1$ and $v_2$ and we get the desired result.

\item \label{C} We claim:$$\dfrac{   \mathcal{F}_1 }{ \mathcal{F}_2 }= 
\dfrac{    \prod_{k=1}^2 \prod_{(i,k) \in I_{\textrm{num}}} \Big( u_{k,i}(h_1/h_2)h_2^{\deg u_{k,i}}\Big)^{m_{u_{k,i}}}  }{  \prod_{k=1}^2 \prod_{(i,k) \in I_{\textrm{den}}} \Big( u_{k,i}(h_1/h_2)h_2^{\deg u_{k,i}}\Big)^{m_{u_{k,i}}} }, \textrm{ where } m_{u_k,i} \in \NN.$$

Indeed, we have: for all $j$ there exists $\delta(j)$ such that $\lambda_j=\mu_{\delta(j)}$.\\
(To prove this remark we suppose the converse: There exists $j_0$ such that $\lambda_{j_0} \neq \mu_{k,i}$, for $k=1,2$ and $i=1,\ldots,\deg u$.\\
By definition of $\mathcal{F}_k$ and by step \ref{star}, there exists $(k_1,j_1)$ such that $F_{k_1,j_1}$ and $h_1 - \lambda_{j_0} h_2$ have a common factor in $\CC[\underline{X}]$. We call $\mathcal{P}$ this common factor.\\
By step \ref{starstar}, there exists $(k_2,i_2)$ such that $\mathcal{P}$ is a factor of $h_1 - \mu_{k_2,i_2} h_2$.\\
Thus $h_1-\lambda_{j_0} h_2$ and $h_1 - \mu_{k_2,i_2} h_2$ have a common factor. As $\lambda_{j_0}\neq \mu_{k_2,i_2}$ we deduce that $\mathcal{P}$ divides $h_1$ and $h_2$. This is absurd because $h_1/h_2$ is reduced.)\\
Thus $\kappa=0$, and for all $j$ there exists $k(j) \in \{1,2\}$ and  such that $u_{k(j)}(\lambda_j)=0$. \\
As $v_1, v_2 \in \KK[T]$, by conjugation, we deduce that if $\lambda_j$ and $\lambda_{j'}$ are roots of the same irreducible polynomial $u_{k,i}\in \KK[T]$ then $m_j=m_{j'}$. We denote by $m_{u_{k,i}}$ this common value. \\
This gives the claimed equality with $I_{\textrm{num}}\cap I_{\textrm{den}}=\emptyset$, because $\mathcal{F}_1/\mathcal{F}_2$ is reduced.
\item Now we can prove equality $(\mathcal{E})$.
\begin{eqnarray*}
\dfrac{\mathcal{F}_1}{\mathcal{F}_2}&=&\dfrac{    \prod_{k=1}^2 \prod_{(i,k) \in I_{\textrm{num}}} \Big( u_{k,i}(h_1/h_2)h_2^{\deg u_{k,i}}\Big)^{m_{u_{k,i}}}  }{  \prod_{k=1}^2 \prod_{(i,k) \in I_{\textrm{den}}} \Big( u_{k,i}(h_1/h_2)h_2^{\deg u_{k,i}}\Big)^{m_{u_{k,i}}} }, \textrm{ by step } \ref{C},   \\
 &=&\dfrac{ \prod_{k=1}^2 \prod_{(i,k) \in I_{\textrm{num}}} \Big( \prod_{j=1}^k F_{k,j}^{e_{k,i,j}}   \Big)^{m_{u_{k,i}}}  }{ \prod_{k=1}^2 \prod_{(i,k) \in I_{\textrm{den}}} \Big( \prod_{j=1}^k F_{k,j}^{e_{k,i,j}}   \Big)^{m_{u_{k,i}}}     }, \textrm{ by the first item}.
\end{eqnarray*}
This gives the desired equality $(\mathcal{E})$.

\end{enumerate}
\end{enumerate}

\end{proof}

Now we describe our recombination algorithm:\\

\textbf{\textsf{Recombination for Decomposition}}\\
\textbf{Input:} $f=f_1/f_2 \in \KK(X_1,\dots,X_n)$, such that (C) and (H) are satisfied.\\
\textbf{Output:} A decomposition of $f$ if it exists, with $f=u \circ h$, $u=u_1/u_2$ with $\deg u \geq 2$, and  $h=h_1/h_2$ non-composite.\\

\begin{enumerate}
\item  \label{step1}Compute $F=F_1/F_2:=U(f)$ with the algorithm \textsf{Good homography}.
\item \label{step_facto}For k=1, 2, factorize $F_k=\prod_{i=1}^{s_k}F_{k,i}$  in $\KK[\underline{X}]$ with  $F_{k,i}$ irreducible.
\item \label{step_cof}For each $F_{k,i}$ compute the corresponding cofactor $\mathcal{G}_{F_{k,i}}:=D_{F_1/F_2}(F_{k,i})/F_{k,i}$.
\item \label{step_rref}Build the system $\mathcal{S}$ and compute the basis in reduced row echelon form $\mathcal{B}_1$ of $\pi_1(\ker \mathcal{S})$ and $\mathcal{B}_2$ of $\pi_2(\ker \mathcal{S})$.
\item \label{step_deg} For k=1, 2, find $v_k =(v_{k,1}, \ldots,v_{k,s_k}) \in \mathcal{B}_k$ such that:\\  $\sum_{i=1}^{s_k}v_{k,i}\deg F_{k,i}=\min_{w \in \mathcal{B}_k} \sum_{i=1}^{s_k}w_{i}\deg F_{k,i}$, where $w:=(w_{1},\ldots,w_{s_k})$.
\item \label{step_Hk}For k=1, 2, compute $H_k:=\prod_{i=1}^{s_k} F_{k,i}^{v_{k,i}}$.
\item Set $H:=H_1/H_2$.
\item \label{step_u}Compute $u$ such that $u(H)=f$.
\item Return $H$, and $u$.
\end{enumerate}

\begin{Prop}
The algorithm \textsf{Recombination for Decomposition} is correct.
\end{Prop}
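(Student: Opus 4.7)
The plan is to reduce the correctness of the algorithm to a direct application of Proposition \ref{key_prop}, supplemented by a short degree-minimization argument. The key observation will be that, because the vectors $e_{k,i}$ appearing in Proposition \ref{key_prop} are pairwise orthogonal $\{0,1\}$-vectors (hence have pairwise disjoint supports), the reduced row echelon basis $\mathcal{B}_k$ of $\pi_k(\ker \mathcal{S})$ computed in Step \ref{step_rref} must coincide, up to row permutation, with the family $\{e_{k,1},\ldots,e_{k,t_k}\}$. This identification is what will allow us to read off an irreducible factor of $u_k$, and hence $h$, from the algorithm's output.

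First I would set up the situation. After Step \ref{step1}, $F=F_1/F_2 = U(f)$ is a squarefree reduced rational function by Proposition \ref{good_hom_correct}. Assume $f$ is composite: say $f = u_0(h)$ with $h=h_1/h_2$ non-composite. Then $F = u(h)$ where $u := U \circ u_0$ satisfies $\deg u_1 = \deg u_2$ (since $U$ is a degree-one homography, we may invoke Lemma \ref{degu1=degu}), and by Remark \ref{Rem:alphainK} each $u_k$ admits a linear factor $T - \alpha_k$ with $\alpha_k \in \KK$. Proposition \ref{key_prop} then applies and yields that $\{e_{k,1},\ldots,e_{k,t_k}\}$ is a basis of $\pi_k(\ker\mathcal{S})$. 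Combined with the disjoint-support property, this identifies $\mathcal{B}_k$ with this family.

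Next I would analyze Step \ref{step_deg}. By the first item of Proposition \ref{key_prop}, $\sum_j e_{k,i,j}\deg F_{k,j} = \deg\bigl(u_{k,i}(h_1/h_2)\,h_2^{\deg u_{k,i}}\bigr) = \deg(u_{k,i})\cdot \deg h$, so the minimization in Step \ref{step_deg} selects an index $i_k$ realizing $\min_i \deg u_{k,i}$. Since $u_k$ always has a linear factor over $\KK$, this minimum equals $1$, and the corresponding $H_k = \prod_j F_{k,j}^{e_{k,i_k,j}} = h_1 - \alpha_k h_2$ for some $\alpha_k \in \KK$, with $\alpha_1 \neq \alpha_2$ (since $u_1/u_2$ is reduced). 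Thus $H = H_1/H_2$ is the image of $h$ under a degree-one homography, and is therefore non-composite. Step \ref{step_u} then produces $u$ with $u(H) = f$: explicitly, writing $\phi(T) = (T-\alpha_1)/(T-\alpha_2)$ so that $H = \phi(h)$, we have $f = (u_0 \circ \phi^{-1})(H)$.

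The main obstacle is the verification that the RREF basis really coincides with $\{e_{k,1},\ldots,e_{k,t_k}\}$. The critical fact is that a family of pairwise orthogonal $\{0,1\}$-vectors has pairwise disjoint supports, so the matrix whose rows are these vectors, ordered by leading column, is already in reduced row echelon form, which forces uniqueness of the RREF basis. A further subtlety to handle is the degenerate case where $f$ itself is non-composite: then $t_k = 1$ and $\mathcal{B}_k$ has a single element, so the algorithm returns the trivial decomposition $f = U^{-1}(F)$, which is still consistent with the output specification.
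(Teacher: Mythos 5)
Your proposal is correct and follows essentially the same route as the paper's proof: reduce to $F=U(f)$, invoke Lemma \ref{degu1=degu} and Remark \ref{Rem:alphainK} so that Proposition \ref{key_prop} applies, identify $\mathcal{B}_k$ with $\{e_{k,1},\ldots,e_{k,t_k}\}$, and use the degree minimization in Step \ref{step_deg} to extract $H_k=h_1-\alpha_k h_2$, so that $H$ is a degree-one homography of $h$. Your explicit justification that the reduced row echelon basis must coincide with the $e_{k,i}$ (disjoint supports of orthogonal $\{0,1\}$-vectors plus uniqueness of the RREF), and your remark that $\alpha_1\neq\alpha_2$ because $u_1/u_2$ is reduced, carefully fill in details the paper's proof passes over silently.
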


\begin{proof}
Consider $F_1/F_2:=U(f)$. As we want to decompose $f_1/f_2$, we just have to decompose $F_1/F_2$, because $\deg U=1$ and then $U$ is invertible.\\
As $F_1/F_2$ comes from the algorithm \textsf{Good Homography} we can suppose, see Remark \ref{Rem:alphainK}, that  $u_{k,1}(T)=(T-\alpha_k)$ with $\alpha_k \in \KK$, and $k=1, 2$. Furthermore, by Lemma \ref{degu1=degu} we can also suppose that $\deg u_1= \deg u_2$.\\
Then by Proposition \ref{key_prop}, the basis $\mathcal{B}_k$ of $\pi_k(\ker \mathcal{S})$ are $\{e_{k,1}, \ldots, e_{k,t_k} \}$.\\ The vector $e_{k,i}$ gives the polynomial $\mathcal{H}_{k,i}=\prod_{j=1}^{s_k} F_{k,j}^{e_{k,i,j}}=u_{k,i}(h)h_2^{\deg u_{k,i}}$.\\
Furthermore $\deg \mathcal{H}_{k,i}=\sum_{j=1}^{s_k} e_{k,i,j} \deg F_{k,j}= \deg u_{k,i} \deg h$. 
 Thus in Step \ref{step_deg}
$$\min_{w \in \mathcal{B}_k} \sum_{i=1}^{s_k}w_{i}\deg F_{k,i}=\deg h,$$ because this minimum is reached with $e_{k,1}\in \mathcal{B}_k$. Hence $v_k$ in Step \ref{step_Hk} gives $H_k =u_{k,i(k)}(h)h_2^{\deg u_{k,i(k)}}$ with $\deg u_{k,i(k)}=1$.\\
It follows $H=(h_1 -\alpha h_2)/(h_1 - \beta h_2)$ with $\alpha,\beta \in \KK$. Thus $H=v(h)$ with $\deg v=1$, then the algorithm is correct.
 
\end{proof}

\begin{Prop}
The algorithm \textsf{Recombination for Decomposition} can be performed with $\bigOsoft(rd^{n+\omega-1})$ arithmetic operations over $\QQ$ and two factorizations of univariate polynomials of degree $d$ with coefficients in $\KK$.
\end{Prop}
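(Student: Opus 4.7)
The plan is to account for the cost of each of the eight steps of the algorithm and show that the total is dominated by the factorization Step \ref{step_facto} and the linear algebra Step \ref{step_rref}. The main obstacle is correctly bounding the size of the linear system $\mathcal{S}$ and noting that, as written, it has coefficients in $\QQ$ rather than in $\KK$, which is what gives the clean factor $r$ in the final estimate.

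Step \ref{step1} costs $\bigOsoft(d^n)$ operations in $\KK$, i.e.\ $\bigOsoft(rd^n)$ over $\QQ$, by Proposition \ref{complex_good_hom}. Step \ref{step_facto} applies Lecerf's deterministic algorithm \cite{Lec2007} to $F_1$ and $F_2$: each factorization uses one univariate factorization in $\KK[T]$ of degree $d$ and $\bigOsoft(d^{n+\omega-1})$ further operations in $\KK$, hence $\bigOsoft(rd^{n+\omega-1})$ over $\QQ$, plus the two announced univariate factorizations. For Step \ref{step_cof}, I would observe that $F_2\partial_{X_i}F_1-F_1\partial_{X_i}F_2$ has degree at most $2d-1$, so $D_{F_1/F_2}(F_{k,j})$ is computed with a bounded number of dense multivariate multiplications at cost $\bigOsoft(d^n)$ each (via Kronecker substitution), after which a single exact multivariate division yields $\mathcal{G}_{F_{k,j}}$; summed over the $\bigO(d)$ irreducible factors this totals $\bigOsoft(rd^{n+1})$ over $\QQ$, which is absorbed by the target since $\omega\geq 2$.

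Step \ref{step_rref} is the critical one. Each coefficient $\textrm{coef}(\mathcal{G}_{F_{k,j}}^{(l)},\alpha^\epsilon \underline{X}^\tau)$ lies in $\QQ$ by definition, so $\mathcal{S}$ is a $\QQ$-linear system. Since $\mathcal{G}_{F_{k,j}}^{(l)}$ has total degree at most $2d-2$ in the $n$ variables (with $n$ fixed), the number of monomial slots $\underline{X}^\tau$ is $\bigO(d^n)$; multiplying by the $r$ values of $\epsilon$ and the $n-1$ values of $l$ gives $m=\bigO(rd^n)$ equations in $s_1+s_2\leq \bigO(d)$ unknowns (each irreducible factor contributing at least one to the total degree of $F_k$). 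The linear algebra bound recalled in the introduction provides a solution basis in $\bigO(md^{\omega-1})=\bigO(rd^{n+\omega-1})$ arithmetic operations over $\QQ$; computing the projections $\pi_1(\ker\mathcal{S})$, $\pi_2(\ker\mathcal{S})$ and putting them into reduced row echelon form fits in the same bound.

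The remaining steps are negligible. Step \ref{step_deg} performs $\bigO(d^2)$ scalar operations. Step \ref{step_Hk} computes two polynomial products of total degree at most $d$ in $n$ variables, each costing $\bigOsoft(d^n)$ by repeated Kronecker multiplication, hence $\bigOsoft(rd^n)$ over $\QQ$. Step \ref{step_u} reconstructs $u$ from $H$ and $f$ in $\bigOsoft(d^n)$ arithmetic operations using the algorithm recalled in \cite{Cheze2010}. Summing every contribution yields the announced bound of $\bigOsoft(rd^{n+\omega-1})$ operations over $\QQ$ plus two univariate factorizations of degree $d$ in $\KK[T]$.
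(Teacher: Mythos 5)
Your proposal is correct and follows essentially the same route as the paper: charge each step, use Lecerf's algorithm for Step \ref{step_facto} (giving the two univariate factorizations and $\bigOsoft(rd^{n+\omega-1})$ operations), bound the cofactor computation by $\bigOsoft(rd^{n+1})$, and observe that $\mathcal{S}$ is a $\QQ$-linear system with $\bigO(rd^{n})$ equations and $\bigO(d)$ unknowns so that the kernel-basis computation costs $\bigO(rd^{n+\omega-1})$, the remaining steps being negligible. The only differences are cosmetic (e.g.\ your explicit degree bound $2d-2$ on the cofactors), and the accounting matches the paper's.
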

We recall that in our complexity analysis the number of variables is fixed and the degree $d$ tends to infinity.

\begin{proof}
Step \ref{step1} uses $\bigOsoft(d^n)$ arithmetic operations over $\KK$ by Proposition \ref{complex_good_hom}, thus it uses $\bigOsoft(rd^n)$ arithmetic operations over $\QQ$.\\
Step \ref{step_facto} uses $\bigOsoft(d^{n+\omega-1})$ arithmetic operations over $\KK$ because we can use Lecerf's algorithm, see \cite{Lec2007}. Thus we use $\bigOsoft(rd^{n+\omega-1})$ arithmetic operations over $\QQ$ and two factorizations of univariate polynomials of degree $d$ with coefficients in $\KK$.\\
In Step \ref{step_cof}, we compute $D_{F_1/F_2}(F_{k,i})$, thus we perform $2(n-1)$ multiplications of multivariate polynomials. We can do this with a fast multiplication technique, and then this computation costs $\bigOsoft(nrd^n)$ arithmetic operations over $\QQ$. Then we divide $D_{F_1/F_2}(F_{k,i})$ by $F_{k,i}$. We have to perform $n-1$ exact divisions, thus with a Kronecker subsitution we reduce this problem to $n-1$ univariate divisions, and the cost of one such  division belongs then to $\bigOsoft(rd^{n})$. As $s_1$ and $s_2$ are smaller than $d$, Step \ref{step_cof} costs $\bigOsoft(nrd^{n+1})$ arithmetic operations over $\QQ$.\\
Step \ref{step_rref} needs $\bigOsoft(nrd^nd^{\omega-1})$ arithmetic operations over $\QQ$ with Storjohann's method, see \cite[Theorem 2.10]{Storjohann2000}. Indeed, $\mathcal{S}$ has $\bigO((n-1)rd^{n})$ equations and $s_1+s_2$ unknowns, thus at most $2d$ unknowns.\\
Step \ref{step_deg} has a negligeable cost because $\dim_{\QQ} \pi_k(\ker \mathcal{S})=t_k$ is smaller than $d$ and $s_k$ is also smaller than $d$.\\
In Step \ref{step_Hk}, we use a fast multiplication technique and we compute $H_k$ with $\bigOsoft(rd^{n})$ arithmetic operations over $\QQ$.\\
Step \ref{step_u} can be done with $\bigOsoft(rd^{n})$ arithmetic operations over $\QQ$, see \cite{Cheze2010}.\\
Thus the global cost of the algorithm belongs to $\bigOsoft(rd^{n+\omega-1})$ arithmetic operations over $\QQ$.
\end{proof}

\section{Examples}\label{sec:examples}
In this section we show the behavior of the algorithm \textsf{Recombination for Decomposition} with two examples. We consider bivariate rational functions with rational coefficients. Thus hypothesis (C) is satisfied.
\subsection{$f$ is non-composite}
We set:
\begin{eqnarray*}
f_1&=& \left( 1+X+{Y}^{2} \right)\left( X+Y \right)=X+{X}^{2}+X{Y}^{2}+Y+YX+{Y}^{3},\\
f_2&=&f_1-  \left( {Y}^{2}-X-1 \right) \left( Y-2\,X+1 \right) =-{X}^{2}+3\,X{Y}^{2}+2\,Y+2\,YX-{Y}^{2}+1
\end{eqnarray*}

We have $\deg(f_1 + \Lambda f_2)=\deg_{Y}(f_1 + \Lambda f_2)=3$, and 
$$Res_{Y} \Big( f_1(0,Y)+\Lambda f_2(0,Y), \, \partial_{Y} f_1(0,Y) + \Lambda  \partial_{Y} f_2(0,Y)\Big) = -4-24\,\Lambda-92\,{\Lambda}^{2}-64\,{\Lambda}^{3}+8\,{\Lambda}^{4}.$$
Thus hypothesis (H) is satisfied.\\
The algorithm \textsf{Good homography} gives:\\
 $\lambda_a=f_1(0,0)/f_2(0,0)=0$ and $\lambda_b=f_1(0,1)/f_2(0,1)=1$.\\ 

Then
\begin{eqnarray*}
F_1&=&  \left( 1+X+{Y}^{2} \right) \left( X+Y \right),\\
F_{1,1}&=& 1+X+Y^2,\\
F_{1,2}&=&X+Y\\
F_2&=& \left( {Y}^{2}-X-1 \right) \left( Y-2\,X+1 \right) ,\\
F_{2,1}&=&{Y}^{2}-X-1\\
F_{2,2}&=&Y-2\,X+1\\
\end{eqnarray*}

The cofactors are:
\begin{eqnarray*}
\mathcal{G}_{F_{1,1}}&=& 3\,{X}^{2}+8\,Y{X}^{2}+2\,X-2\,YX+7\,X{Y}^{2}-1+3\,{Y}^{2}-6\,{Y}^{3}-
6\,{Y}^{4}+2\,Y
\\
\mathcal{G}_{F_{1,2}}&=&  3\,{X}^{2}+8\,Y{X}^{2}+4\,YX+6\,X-6\,{Y}^{2}-4\,Y+3-3\,{Y}^{4}-2\,{Y}^
{3}
 \\
\mathcal{G}_{F_{2,1}}&=& 3\,{X}^{2}+8\,Y{X}^{2}+X{Y}^{2}-6\,YX+2\,X-1-2\,Y-6\,{Y}^{4}-11\,{Y}^{
2}-6\,{Y}^{3}
  \\
\mathcal{G}_{F_{2,2}}&=&  3\,{X}^{2}+8\,Y{X}^{2}+6\,X{Y}^{2}+8\,YX+6\,X-3\,{Y}^{4}+8\,{Y}^{2}-2
\,{Y}^{3}+3
 \\
\end{eqnarray*}

The linear system $\mathcal{S}$ is the following:
$$
\left[ \begin {array}{cccc} -1&3&-1&3\\\noalign{\medskip}2&6&2&6\\\noalign{\medskip}3&3&3&3\\\noalign{\medskip}0&0&0&0
\\\noalign{\medskip}0&0&0&0\\\noalign{\medskip}2&-4&-2&0
\\\noalign{\medskip}-2&4&-6&8\\\noalign{\medskip}8&8&8&8
\\\noalign{\medskip}0&0&0&0\\\noalign{\medskip}3&-6&-11&8
\\\noalign{\medskip}7&0&1&6\\\noalign{\medskip}0&0&0&0
\\\noalign{\medskip}-6&-2&-6&-2\\\noalign{\medskip}0&0&0&0
\\\noalign{\medskip}-6&-3&-6&-3\end {array} \right] 
$$

A basis of $\ker(\mathcal{S})$ is given by: $\{(-1,-1,1,1)\}$.\\
Then it follows that $f_1/f_2$ is non-composite.

\subsection{$f$ is composite} Here we set: 
\begin{eqnarray*}
h_1&=&\left( 1+X+{Y}^{2} \right)\left( X+Y \right)\\
h_2&=&h_1-  \left( {Y}^{2}-X-1 \right) \left( Y-2\,X+1 \right)\\
u_1&=&T.(T-1)\\
u_2&=&T^2+1\\
f_1/f_2&=&u_1/u_2(h_1/h_2).
\end{eqnarray*}
We have constructed a composite rational function $f_1/f_2$ and now we illustrate how our algorithm computes  a decomposition. We can already remark that in the previous example we have shown that $h_1/h_2$ is non-composite. \\

In this situation the hypothesis (H) is satisfied and the algorithm \textsf{Good Homography} gives:\\
 $\lambda_a=f_1(0,0)/f_2(0,0)=0$ and $\lambda_b=f_1(0,2)/f_2(0,2)=90/101$.\\
Then:
\begin{eqnarray*}
F_{1,1}&=&1+X+Y^2\\
F_{1,2}&=&2X-Y-1\\
F_{1,3}&=&Y^2-X-1\\
F_{1,4}&=&X+Y\\
F_{2,1}&=&2\,{X}^{2}+11\,X+9+29\,XY+29\,Y+38\,X{Y}^{2}-9\,{Y}^{2}+11\,{Y}^{3}\\
F_{2,2}&=&11\,{X}^{2}+X-10-19\,YX-19\,Y-29\,X{Y}^{2}+10\,{Y}^{2}+{Y}^{3}
\end{eqnarray*}

The basis in reduced row echelon form of $\pi_1(\ker \mathcal{S})$ (resp. $\pi_2(\ker \mathcal{S})$) is $\{(1,0,0,1); (0,1,1,0)\}$ (resp. $\{(1,0);(0,1)\}$).\\
Step \ref{step_deg} in the algorithm \textsf{Recombination for Decomposition} gives: $v_1=(1,0,0,1)$ and $v_2=(1,0)$.\\
Then we have $H_1:=F_{1,1}.F_{1,4}$ and $H_2:=F_{2,1}$.\\
We remark that $H_1=h_1$ and that $H_2=11h_1+9h_2$. Then $H_1/H_2=w(h_1/h_2)$, where $w(T)=T/(11T+9)$. As $h_1/h_2$ is non-composite and $\deg w=1$, we get a correct output.

\section{Conclusion}\label{sec:conclusion}
In conclusion, we summarize our algorithm with a ``derivation point of view''.\\
In order to decompose $f_1/f_2$, we have computed with Darboux method a rational first integral  of $D_{f_1/f_2}$ with minimum degree. That is to say we have computed $h_1/h_2 \in \KK(X_1,\ldots,X_n)$ such that $D_{f_1/f_2}(h_1/h_2)=0$ and $\deg (h_1/h_2)$ is minimum. In a general setting, Darboux method works as follows: If we want to compute a rational first integral of a derivation $D$, first we compute all the Darboux polynomials $F_i$ and their associated cofactors $\mathcal{G}_{F_i}$, second we solve the linear system
$$\sum_i e_i \mathcal{G}_{F_i}=0.$$
Then thanks to Proposition \ref{prop_cof}, we deduce that $\prod_i F_i^{e_i}$ is a first integral, i.e. \mbox{$D(\prod_i F_i^{e_i})=0$}.\\

When we consider the derivation $D_{f_1/f_2}$ the computation of Darboux polynomials is reduced to the  factorization of $f_1 +\lambda f_2$. Thus this step can be done efficiently. In the general setting, we can also reduce the computation of Darboux polynomials to a factorization problem, see  \cite{ChezeDarboux}.\\
During the second step, we compute the kernel of $\sum_i e_i \mathcal{G}_{F_i}=0$. It is actually a recombination step. Indeed, this system explains how we have to recombine $F_i$ in order to get a rational first integral. Furthermore, the cofactor $\mathcal{G}_{F_i}=D(F_i)/F_i$ can be viewed as a logarithmic derivative.\\
In conclusion, the recombination scheme used in this paper is called nowadays the logarithmic derivative method, but this method is Darboux original method.

\appendix
\section{Convex-dense bivariate decomposition}
In this appendix we give complexity results for  the decomposition of sparse bivariate rational functions. These results rely on a strategy proposed by J.~Berthomieu and G.~Lecerf in \cite{BL2010}.\\

Given a polynomial $f(X,Y) \in \KK[X,Y]$, its support is the set $S_f$ of integer points $(i;j)$ such that the monomial $X^iY^j$ appears in $f$ with a non zero coefficient. The convex hull, in the real space $\RR^2$ of $S_f$ is denoted by $N(f)$ and called the Newton's polygon of $f$. We denote by $|N(f)|$ the number of integral points of $N(f)$. We called $|N(f)|$ the convex-size of $f$.\\

Roughly speaking, the transformation proposed in \cite{BL2010} consists in a monomial transformation that preserves the convex-size  but decreases the dense size. The considered transformation $\mathcal{T}$ can be described in the following way:
$$\mathcal{T}=\mathcal{B}\circ \mathcal{L}, \textrm{ where}$$
$$\mathcal{B}(X^iY^j)=X^{i+b_1}Y^{j+b_2}, \, b_1, b_2 \in \ZZ,$$
$$\mathcal{L}(X^iY^j)=X^{a_1 i + a_2 j}Y^{a_3 i+a_4 j}, \, a_1a_4-a_2a_3=\pm1.$$

$\mathcal{T}$ can be defined on $\KK[X,Y,X^{-1},Y^{-1}]$, and we define: $\mathcal{T}(\sum_{i,j} f_{i,j}X^iY^j)=\sum_{i,j} f_{i,j}\mathcal{T}(X^iY^j)$.\\
The transformation $\mathcal{L}$ corresponds to the linear map: $(i,j) \mapsto \mathcal{A} \,^t(i,j)$, where 
$$\mathcal{A}=
\begin{pmatrix}
a_1&a_2\\
a_3&a_4
\end{pmatrix}.$$
We denote by $\mathcal{L}^{-1}$ the transformation corresponding to $\mathcal{A}^{-1}$.\\
If $f(X,Y) \in \KK[X,Y]$, then $\mathcal{L}(f) \in \KK[X,Y,X^{-1},Y^{-1}]$ and $\mathcal{L}(f)$ can be written $\mathcal{L}(f)=c_{\mathcal{L}}(f).\mathcal{L}_0(f)$, where $\mathcal{L}_0(f) \in  \KK[X,Y]$ and $c_{\mathcal{L}}(f)=X^iY^j \in \KK[X,Y,X^{-1},Y^{-1}]$. Furthermore, we also have $\mathcal{L}(F_1.F_2)=\mathcal{L}(F_1).\mathcal{L}(F_2)$.

Let $S$ be a finite subset of $\ZZ^2$. Set $S$ is said to be normalized if  it belongs to $\NN^2$ and if it contains at least one point in $\{0\}\times \NN$, and also at least one point in $\NN \times \{0\}$. For such a normalized set, we write $d_x$ (resp. $d_y$) for the largest abscissa (resp. ordinate) involved in $S$, so that the bounding rectangle is $\mathcal{R}=[0,d_x]\times [0,d_y]$. The following result is proved in \cite[Theorem 2]{BL2010}:\\
\emph{For any normalized finite subset $S$ of $\ZZ^2$, of cardinality $\sigma$, convex-size $\pi$, and bounding rectangle $[0,d_x]\times[0,d_y]$, and dense size $\delta=(d_x+1)(d_y+1)$, one can compute an affine map $\mathcal{T}=\mathcal{B}\circ \mathcal{L}$, with $\bigO(\sigma \log^2\delta)$ bit-operations, such that $\mathcal{T}(S)$ is normalized of dense size at most $9\pi$.}

We are going to use this transformation in order to prove:

\begin{Thm}\label{thm:convexcomplexity}
Let $f_1/f_2(X,Y) \in \KK(X,Y)$ such that $\deg(f_1/f_2)=d$, $N(f_1)\subset N$, $N(f_2) \subset N$ and $N$ is normalized. Then 
\begin{enumerate}
\item If $\KK$ is field with characteristic $0$ or at least $d(d-1)+1$ and (H) is satisfied, then there exists a probabilistic algorithm which computes the decomposition of $f_1/f_2$ with at most $\bigOsoft(|N|^{1,5})$ operations in $\KK$ and two factorizations of a univariate polynomial of degree at most $9 |N|$ over $\KK$.
\item If (C) and (H) are satisfied, then there exists a deterministic algorithm which computes the decomposition of $f_1/f_2$ with at most $\bigOsoft(r.|N|^{(\omega+1)/2})$ operations over $\QQ$ and two factorizations of an univariate polynomials of degree at most $9|N|$ over $\QQ[\alpha]$.
\end{enumerate}
\end{Thm}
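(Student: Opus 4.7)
The plan is to use the Berthomieu--Lecerf monomial transformation to reduce the sparse decomposition to a dense bivariate decomposition of controlled size, and then invoke the existing algorithms. I would first apply \cite[Theorem~2]{BL2010} to the normalized support~$N$ (which contains the supports of both $f_1$ and $f_2$): in $\bigOsoft(|N|)$ bit-operations this produces an affine map $\mathcal{T}=\mathcal{B}\circ\mathcal{L}$ such that $\mathcal{T}(f_1)$ and $\mathcal{T}(f_2)$ lie in $\KK[X,Y]$ with dense size at most $9|N|$. The translation $\mathcal{B}$ multiplies every monomial by the same $X^{b_1}Y^{b_2}$ and therefore cancels in the quotient, so $\mathcal{T}(f_1)/\mathcal{T}(f_2)=\mathcal{L}(f_1)/\mathcal{L}(f_2)$.

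Next I would check that $\mathcal{L}$ preserves the decomposition structure. Since $\det\mathcal{A}=\pm 1$, the map $\mathcal{L}$ is the monomial substitution $X\mapsto X^{a_1}Y^{a_3}$, $Y\mapsto X^{a_2}Y^{a_4}$, which is an automorphism of $\KK[X,Y,X^{-1},Y^{-1}]$, and thus commutes with composition by any $u\in\KK(T)$: $\mathcal{L}(u(g))=u(\mathcal{L}(g))$. Hence a decomposition $f_1/f_2=u(h_1/h_2)$ yields $\mathcal{T}(f_1)/\mathcal{T}(f_2)=u(\tilde{H})$, where $\tilde{H}\in\KK(X,Y)$ is the reduced form of $\mathcal{L}(h_1)/\mathcal{L}(h_2)$ (the Laurent-monomial ratio $c_{\mathcal{L}}(h_1)/c_{\mathcal{L}}(h_2)$ is absorbed in the reduction without altering the degree). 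Non-compositeness of $\tilde{H}$ follows from the invertibility of $\mathcal{L}$, and conversely $\mathcal{L}^{-1}$ transports any decomposition of $\mathcal{T}(f_1)/\mathcal{T}(f_2)$ back to one of $f_1/f_2$, so the two problems are equivalent.

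For the complexity, I would apply the probabilistic algorithm of \cite{Cheze2010} to $\mathcal{T}(f_1)/\mathcal{T}(f_2)$ for part~(1) and the deterministic algorithm of Theorem~\ref{main_thm} for part~(2). In both cases the dominant cost is one, respectively two, bivariate polynomial factorizations; using the dense-size-sensitive bivariate factorization of \cite{Lec2006,Lec2007,BL2010} on an input of dense size $D\leq 9|N|$ costs $\bigOsoft(D^{3/2})$ in the probabilistic setting and $\bigOsoft(rD^{(\omega+1)/2})$ in the deterministic setting, giving the claimed $\bigOsoft(|N|^{3/2})$ and $\bigOsoft(r|N|^{(\omega+1)/2})$ bounds. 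The univariate factorizations involved have degree at most $\max(D_x,D_y)\leq 9|N|$. Applying $\mathcal{L}^{-1}$ to the computed inner rational function returns $h_1/h_2$, and $u$ is recovered by the quasi-linear procedure of \cite{Cheze2010}.

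The main obstacle is the interaction between the monomial transformation and hypothesis~(H): (H) is stated on the original pencil $f_1+\Lambda f_2$ and is not automatically preserved by $\mathcal{L}$. However, since the locus of unimodular changes breaking~(H) on a fixed polynomial has codimension at least one, post-composing $\mathcal{L}$ with a small additional unimodular integer matrix (chosen deterministically under~(C) via a polynomial-size search, as in Section~\ref{sec_reduc}) restores~(H) at negligible extra cost, and the remainder of the argument proceeds as in the dense case.
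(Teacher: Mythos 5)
Your proposal follows essentially the same route as the paper: apply the Berthomieu--Lecerf map $\mathcal{T}=\mathcal{B}\circ\mathcal{L}$ from \cite[Theorem 2]{BL2010} to reduce to a dense bivariate instance of size at most $9|N|$, show that $\mathcal{B}$ cancels in the quotient and that decompositions correspond under $\mathcal{L}$ and $\mathcal{L}^{-1}$ (this is exactly Proposition \ref{prop:T(f)}, which the paper proves via the root factorization $\prod_i(h_1-\mu_{1,i}h_2)/\prod_j(h_1-\mu_{2,j}h_2)$ rather than your automorphism argument, an equivalent observation), and then invoke the dense probabilistic algorithm of \cite{Cheze2010} and the deterministic algorithm of Theorem \ref{main_thm} together with their stated complexities. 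The only divergence is your final paragraph: the paper does not perform (nor discuss) any repair of hypothesis (H) after the monomial change, it simply combines Propositions \ref{prop:convexcorrect} and \ref{prop:convexcomp} with the cited complexity results; your concern is legitimate but your proposed fix (post-composing with a further unimodular matrix) is itself delicate, since such a map is not guaranteed to restore (H) and, if chosen without care, could enlarge the bounding rectangle and spoil the $9|N|$ dense-size bound on which the complexity rests.
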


Now, we explain how we use the transformation $\mathcal{T}$ in the decomposition setting.
\begin{Prop}\label{prop:T(f)}
If $f_1/f_2=u(h_1/h_2)$ then $\mathcal{T}(f_1)/\mathcal{T}(f_2)=u\big(\mathcal{L}(h_1)/\mathcal{L}(h_2)\big)$.\\
If $\mathcal{T}(f_1)/\mathcal{T}(f_2)=u(H_1/H_2)$ then  $f_1/f_2=u\big(\mathcal{L}^{-1}(H_1)/\mathcal{L}^{-1}(H_2)\big)$.
\end{Prop}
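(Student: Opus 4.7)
The plan is to observe that $\mathcal{L}$, viewed on the Laurent polynomial ring $\KK[X,Y,X^{-1},Y^{-1}]$, is in fact a $\KK$-algebra endomorphism determined on generators by $X\mapsto X^{a_1}Y^{a_3}$ and $Y\mapsto X^{a_2}Y^{a_4}$. Indeed, $\mathcal{L}$ is defined to be $\KK$-linear, and the relation $\mathcal{L}(X^iY^j)=\mathcal{L}(X)^i\mathcal{L}(Y)^j$ implies multiplicativity on monomials, hence on all Laurent polynomials; this is exactly the identity $\mathcal{L}(F_1\cdot F_2)=\mathcal{L}(F_1)\cdot\mathcal{L}(F_2)$ already recorded in the excerpt. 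Once this is in hand, the proposition is essentially the statement that a ring homomorphism commutes with composition of rational functions in one variable, the $\mathcal{B}$-part of $\mathcal{T}$ playing no role because it is multiplication by a fixed monomial that cancels in any ratio.

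Concretely, I would first record the reduction $\mathcal{T}(f_1)/\mathcal{T}(f_2)=\mathcal{L}(f_1)/\mathcal{L}(f_2)$, which follows from $\mathcal{B}(F)=X^{b_1}Y^{b_2}F$. Next I would extend $\mathcal{L}$ to the fraction field $\KK(X,Y)$ via $\mathcal{L}(\psi_1/\psi_2)=\mathcal{L}(\psi_1)/\mathcal{L}(\psi_2)$ and verify that, for every $u\in\KK(T)$ and every $\psi\in\KK(X,Y)$, one has $\mathcal{L}\bigl(u(\psi)\bigr)=u\bigl(\mathcal{L}(\psi)\bigr)$. For $u\in\KK[T]$ this is immediate by writing $u(\psi)=\sum a_i\psi^i$ and pushing $\mathcal{L}$ through the sum and the powers using linearity and multiplicativity; for $u=u_1/u_2\in\KK(T)$ one then takes a ratio.

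The first assertion now follows at once: applying $\mathcal{L}$ to the identity $f_1/f_2=u(h_1/h_2)$ and using the homomorphism property yields $\mathcal{L}(f_1)/\mathcal{L}(f_2)=u\bigl(\mathcal{L}(h_1)/\mathcal{L}(h_2)\bigr)$, and by the preliminary reduction the left-hand side equals $\mathcal{T}(f_1)/\mathcal{T}(f_2)$. For the converse, I would exploit that $\mathcal{L}^{-1}$ is itself a monomial transformation of the same form, corresponding to the unimodular matrix $\mathcal{A}^{-1}\in\mathrm{GL}_2(\ZZ)$, and is therefore a $\KK$-algebra endomorphism satisfying $\mathcal{L}^{-1}\circ\mathcal{L}=\mathrm{id}$ on $\KK(X,Y)$. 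Starting from $\mathcal{T}(f_1)/\mathcal{T}(f_2)=u(H_1/H_2)$, i.e.\ $\mathcal{L}(f_1)/\mathcal{L}(f_2)=u(H_1/H_2)$, I apply $\mathcal{L}^{-1}$ to both sides and conclude $f_1/f_2=u\bigl(\mathcal{L}^{-1}(H_1)/\mathcal{L}^{-1}(H_2)\bigr)$.

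There is no serious obstacle here; the only point requiring some care is bookkeeping, since $\mathcal{L}(h_k)$ and $\mathcal{L}^{-1}(H_k)$ may be genuine Laurent polynomials rather than ordinary polynomials. This is harmless because all identities are stated as equalities in $\KK(X,Y)$, where the transformations and their inverses are honest $\KK$-algebra endomorphisms. In short, the proposition is nothing more than the functoriality of composition of rational functions under the $\KK$-algebra endomorphism $\mathcal{L}$, modulo the trivial cancellation of the monomial shift $\mathcal{B}$.
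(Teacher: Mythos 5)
Your proof is correct, but it is packaged differently from the paper's. The paper proves the first assertion by writing $f_1/f_2$ as $\prod_i (h_1-\mu_{1,i}h_2)\big/\prod_j (h_1-\mu_{2,j}h_2)$ with $\mu_{k,i}$ the roots of $u_k$, then pushing $\mathcal{L}$ through each linear-in-$h$ factor using linearity and multiplicativity, and reassembling; the second assertion is dismissed as ``similar''. You instead promote $\mathcal{L}$ to a $\KK$-algebra endomorphism of $\KK[X,Y,X^{-1},Y^{-1}]$ (indeed an automorphism, since $\mathcal{A}\in\mathrm{GL}_2(\ZZ)$), extend it to $\KK(X,Y)$, and invoke the general fact that such a map commutes with composition by any $u\in\KK(T)$, with $\mathcal{B}$ cancelling in every ratio. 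What your route buys: you never need the roots $\mu_{k,i}$ (which live in $\overline{\KK}$ and force one to keep track of leading constants), you do not implicitly rely on Lemma \ref{Lem_Bodin} or on $\deg u_1=\deg u_2$, and the converse direction comes out symmetrically by applying $\mathcal{L}^{-1}$ rather than by repeating the argument. What the paper's route buys is concreteness: the factored form ties directly into the recombination machinery used elsewhere (Lemma \ref{Lem_Bodin}, Proposition \ref{key_prop}), so the reader sees explicitly how the factors $h_1-\mu h_2$ transform. The only points of care in your version --- that $\mathcal{L}$ is injective so it extends to the fraction field and $\mathcal{L}(f_2)\neq 0$, and that multiplicativity on monomials plus $\KK$-linearity gives multiplicativity on all Laurent polynomials --- you address adequately via unimodularity and the identity $\mathcal{L}(F_1F_2)=\mathcal{L}(F_1)\mathcal{L}(F_2)$ already recorded in the paper.
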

\begin{proof}
We prove the first item, the second can be proved in a similar way.\\
We have: $\dfrac{f_1}{f_2}=\dfrac{\prod_i (h_1- \mu_{1,i} h_2)}{\prod_{j} (h_1 - \mu_{2,j} h_2)}$, where $\mu_{k,i}$ are roots of $u_k$. Then,
\begin{eqnarray*}
\dfrac{\mathcal{T}(f_1)}{\mathcal{T}(f_2)}&=& \dfrac{\mathcal{B}\circ \mathcal{L} (f_1)}{\mathcal{B}\circ \mathcal{L} (f_2)}=\dfrac{X^{b_1}Y^{b_2} \mathcal{L}(f_1)}{X^{b_1}Y^{b_2} \mathcal{L}(f_2)}\\
&=&\dfrac{\mathcal{L}(f_1)}{ \mathcal{L}(f_2)}=\dfrac{\prod_i \big( \mathcal{L}(h_1) - \mu_{1,i} \mathcal{L}(h_2)\big)}{\prod_j \big(\mathcal{L}(h_1) - \mu_{2,j} \mathcal{L}(h_2)\big)}\\
&=&\dfrac{ u_1\big( \mathcal{L}(h_1)/\mathcal{L}(h_2) \big)}{u_2\big( \mathcal{L}(h_1)/\mathcal{L}(h_2) \big)}=u\big(\mathcal{L}(h_1)/\mathcal{L}(h_2)\big)
\end{eqnarray*}
\end{proof}

This gives the following algorithm:\\

\textbf{\textsf{Convex bivariate decomposition}}\\
\textbf{Input:} $f=f_1/f_2 \in \KK(X,Y)$, where $N(f_1) \subset N$, $N(f_2) \subset N$ and $N$ is normalized.\\
\textbf{Output:} A decomposition of $f$ if it exists, with $f=u \circ h$, $u=u_1/u_2$ with $\deg u \geq 2$, and  $h=h_1/h_2$ non-composite.\\

\begin{enumerate}
\item Compute $F=\mathcal{T}(f_1)/\mathcal{T}(f_2)$.
\item Decompose $F=u(H)$.
\item Return $f=u(h)$, where $h=\dfrac{\mathcal{L}^{-1}(H_1)}{\mathcal{L}^{-1}(H_2)}=\dfrac{c_{\mathcal{L}^{-1}}(H_1) .\mathcal{L}_0^{-1}(H_1)}{ c_{\mathcal{L}^{-1}}(H_2) .\mathcal{L}_0^{-1}(H_2)} \in \KK(X,Y)$.
\end{enumerate}

\begin{Prop}\label{prop:convexcorrect}
The algorithm \textsf{Convex bivariate decomposition} is correct.
\end{Prop}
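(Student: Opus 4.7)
The overall plan is to show that the algorithm's output $(u,h)$ satisfies two things: (a) the identity $f = u \circ h$ holds, and (b) the inner rational function $h$ is genuinely non-composite. Both assertions follow from Proposition~\ref{prop:T(f)}, exploiting the fact that the monomial transformation $\mathcal{T}$ is a multiplicative bijection and therefore converts every decomposition into a decomposition and vice versa.

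First, I would verify (a) directly. Step~2 produces a decomposition $F = u(H_1/H_2)$ with $\deg u \geq 2$ and $H = H_1/H_2$ non-composite. Applying the second half of Proposition~\ref{prop:T(f)} to this equality yields
$$\frac{f_1}{f_2} \;=\; u\!\left(\frac{\mathcal{L}^{-1}(H_1)}{\mathcal{L}^{-1}(H_2)}\right) \;=\; u(h),$$
which is precisely the relation the algorithm returns.

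For (b), I would argue by contradiction. Suppose that $h = v(g)$ with $v \in \KK(T)$ of degree at least two and $g = g_1/g_2 \in \KK(X,Y)$. Writing $h$ in reduced form $h_1/h_2$, the multiplicativity of $\mathcal{L}$ combined with the identity $\mathcal{L} \circ \mathcal{L}^{-1} = \mathrm{id}$ on Laurent polynomials gives $\mathcal{L}(h_1)/\mathcal{L}(h_2) = H_1/H_2$ as elements of $\KK(X,Y)$. Now applying the first half of Proposition~\ref{prop:T(f)} to the identity $h = v(g)$ (with a suitable monomial shift $\mathcal{B}$ to move every intermediate object into $\KK[X,Y]$) produces
$$\frac{H_1}{H_2} \;=\; v\!\left(\frac{\mathcal{L}(g_1)}{\mathcal{L}(g_2)}\right).$$
After extracting the common monomial prefactor between $\mathcal{L}(g_1)$ and $\mathcal{L}(g_2)$, the right-hand side reads as $v$ composed with an element of $\KK(X,Y)$, contradicting the non-compositeness of $H$ granted by Step~2.

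The main technical hurdle will be the Laurent-polynomial bookkeeping: the objects $\mathcal{L}^{-1}(H_k)$ and $\mathcal{L}(g_k)$ live a priori in $\KK[X,Y,X^{-1},Y^{-1}]$, so one must check carefully that the monomial prefactors $c_{\mathcal{L}^{-1}}(H_k)$ and their analogues for $g$ cancel between numerator and denominator so that the inner function is truly an element of $\KK(X,Y)$ of positive degree. Once this Laurent/polynomial reduction is settled, the correctness of the algorithm reduces to a direct invocation of Proposition~\ref{prop:T(f)} and of the assumed correctness of the decomposition subroutine used in Step~2.
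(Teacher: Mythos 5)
Your proposal is correct and follows essentially the same route as the paper, whose entire proof is the one-line remark that the statement follows from Proposition~\ref{prop:T(f)}; you simply make explicit the two points the paper leaves implicit, namely that the second half of Proposition~\ref{prop:T(f)} gives $f=u(h)$ and that the first half (via the invertibility of $\mathcal{L}$) transfers any hypothetical decomposition of $h$ back to $H$, contradicting its non-compositeness.
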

\begin{proof}
This follows from Proposition \ref{prop:T(f)}.
\end{proof}

\begin{Prop}\label{prop:convexcomp}
The algorithm \textsf{Convex bivariate decomposition} uses one decomposition of a rational function of degree at most $9|N|$ and $\bigO(\sigma^2\delta)$ bit operations.
\end{Prop}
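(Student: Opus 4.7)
The plan is to verify correctness via Proposition~\ref{prop:convexcorrect} (already established) and then break down the cost of \textsf{Convex bivariate decomposition} into the three steps of the algorithm, attributing the single decomposition call to step~2 and collecting all remaining bookkeeping into the $\bigO(\sigma^2 \delta)$ bit-bound.

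First, for step~1, I invoke \cite[Theorem~2]{BL2010} directly: from the support of $f_1$ and $f_2$, both contained in $S \subset N$, the affine map $\mathcal{T} = \mathcal{B}\circ\mathcal{L}$ can be constructed in $\bigO(\sigma\log^2\delta)$ bit-operations with the key guarantee that $\mathcal{T}(S)$ is normalized of dense size at most $9\pi = 9|N|$. Actually applying $\mathcal{T}$ to each of the $\bigO(\sigma)$ monomials of $f_1$ and $f_2$ costs only integer arithmetic on exponent pairs of size $\bigO(\log\delta)$, hence $\bigO(\sigma\log\delta)$ further bit-operations.

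Second, for step~2, I need to argue that the rational function $F = \mathcal{T}(f_1)/\mathcal{T}(f_2)$ has degree at most $9|N|$ in the sense used throughout the paper. Since $\mathcal{T}(S_{f_1}) \cup \mathcal{T}(S_{f_2})$ is normalized and fits in a bounding rectangle $[0,D_x]\times[0,D_y]$ with $(D_x+1)(D_y+1) \leq 9|N|$, the total degrees of $\mathcal{T}(f_1)$ and $\mathcal{T}(f_2)$ are each at most $D_x + D_y \leq 9|N|$; hence a single call to an ambient decomposition algorithm handles a rational function of degree at most $9|N|$, exactly as stated.

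Third, for step~3, recovering $h$ from $H = H_1/H_2$ amounts to applying $\mathcal{L}^{-1}$ to $H_1$ and $H_2$, then extracting the polynomial parts via $\mathcal{L}^{-1}(H_k) = c_{\mathcal{L}^{-1}}(H_k) \cdot \mathcal{L}_0^{-1}(H_k)$ as given in the algorithm. Each of $H_1,H_2$ has at most $\bigO(|N|) \leq \bigO(\delta)$ monomials, and transforming each monomial is once more exponent arithmetic of bit-size $\bigO(\log\delta)$. Summing all three steps gives a cost comfortably absorbed into $\bigO(\sigma^2\delta)$ bit-operations (the statement is loose in this quantity). The only delicate point, which is the main obstacle, is the degree bound in step~2: one must confirm that the bound on the \emph{dense size} of $\mathcal{T}(S)$ really yields a bound on the \emph{total degree} after reduction, including care with cancellations in the quotient $\mathcal{T}(f_1)/\mathcal{T}(f_2)$. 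The rest is routine arithmetic on supports.
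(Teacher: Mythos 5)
Your proposal is correct and takes essentially the same route as the paper, whose proof is simply the one-line citation ``apply \cite[Theorem 2]{BL2010} to $N$''; your step-by-step accounting (constructing and applying $\mathcal{T}$, bounding the degree of $\mathcal{T}(f_1)/\mathcal{T}(f_2)$ by $D_x+D_y<9|N|$ via the dense-size bound, and the cheap exponent arithmetic for $\mathcal{L}^{-1}$ in step 3) just makes that citation explicit within the stated $\bigO(\sigma^2\delta)$ budget. The ``delicate point'' you flag is harmless: reducing the fraction $\mathcal{T}(f_1)/\mathcal{T}(f_2)$ can only decrease the degrees of numerator and denominator, so the bound $9|N|$ on the degree of the rational function passed to the decomposition call stands.
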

\begin{proof}
We apply \cite[Theorem 2]{BL2010} to $N$.
\end{proof}

The proof of Theorem \ref{thm:convexcomplexity} comes from Proposition \ref{prop:convexcorrect} and Proposition \ref{prop:convexcomp} and  complexity results given in Theorem \ref{main_thm} and \cite[Theorem 2]{Cheze2010}.

\newcommand{\etalchar}[1]{$^{#1}$}


\begin{thebibliography}{BHKS09}

\bibitem[AGR95]{Alonso_Gut_Recio}
C.~Alonso, J.~Gutierrez, and T.~Recio.
\newblock A rational function decomposition algorithm by near-separated
  polynomials.
\newblock {\em J. Symbolic Comput.}, 19(6):527--544, 1995.

\bibitem[AT85]{Alagar_Thanh}
V.~S. Alagar and Mai Thanh.
\newblock Fast polynomial decomposition algorithms.
\newblock In {\em E{UROCAL} '85, {V}ol.\ 2 ({L}inz, 1985)}, volume 204 of {\em
  Lecture Notes in Comput. Sci.}, pages 150--153. Springer, Berlin, 1985.

\bibitem[BCN]{BuseChezeNajib}
L.~Bus{\'e}, G.~Ch{\`e}ze, and S.~Najib.
\newblock Noether's forms for the study of non-composite rational functions and
  their spectrum.
\newblock {\em A. Arithmetica}, to appear.

\bibitem[BDN09]{BodinDebesNajib}
A.~Bodin, P.~D{\`e}bes, and S.~Najib.
\newblock Indecomposable polynomials and their spectrum.
\newblock {\em A. Arithmetica}, 139(1):79--100, 2009.

\bibitem[BHKS09]{Bellabas}
K.~Belabas, {M. van} Hoeij, J.~Kl\"uners, and A.~Steel.
\newblock Factoring polynomials over global fields.
\newblock {\em J. Theorie des Nombres de Bordeaux}, 21:15--39, 2009.

\bibitem[BL10]{BL2010}
J.~Berthomieu and G.~Lecerf.
\newblock Convex-dense bivariate polynomial factorization.
\newblock Manuscript, 2010.

\bibitem[BLS{\etalchar{+}}04]{Bostan}
A.~Bostan, G.~Lecerf, B.~Salvy, \'E. Schost, and B.~Wiebelt.
\newblock Complexity {I}ssues in {B}ivariate {P}olynomial {F}actorization.
\newblock In {\em Proceedings of ISSAC 2004}, pages 42--49. ACM, 2004.

\bibitem[BP94]{BiPa1994}
D.~Bini and V.Y. Pan.
\newblock {\em Polynomial and matrix computations. {V}ol. 1}.
\newblock Progress in Theoretical Computer Science. Birkh\"auser Boston Inc.,
  Boston, MA, 1994.
\newblock Fundamental algorithms.

\bibitem[BZ85]{Barton_Zippel}
D.R. Barton and R.~Zippel.
\newblock Polynomial decomposition algorithms.
\newblock {\em J. Symbolic Comput.}, 1(2):159--168, 1985.

\bibitem[Ch{\`e}10]{Cheze2010}
G.~Ch{\`e}ze.
\newblock Nearly optimal algorithms for the decomposition of multivariate
  rational functions and the extended {L}uroth's theorem.
\newblock {\em J. Complexity}, 26(4):344--363, 2010.

\bibitem[Ch{\`e}11]{ChezeDarboux}
G.~Ch{\`e}ze.
\newblock Computation of {D}arboux polynomials and rational first integrals
  with bounded degree in polynomial time.
\newblock {\em J. Complexity}, to appear 2011.

\bibitem[CL07]{CL}
G.~Ch{\`e}ze and G.~Lecerf.
\newblock Lifting and recombination techniques for absolute factorization.
\newblock {\em J. Complexity}, 23(3):380--420, 2007.

\bibitem[CN10]{ChezeNajib}
G.~Ch{\`e}ze and S.~Najib.
\newblock Indecomposability of polynomials via jacobian matrix.
\newblock {\em J. Algebra}, 324(1):1--11, 2010.

\bibitem[Dic87]{Dickerson}
M.~Dickerson.
\newblock Polynomial decomposition algorithms for multivariate polynomials.
\newblock Technical Report TR87-826, Comput. Sci., Cornell Univ., 1987.

\bibitem[DLA06]{Dumortier_Llibre_Artes}
F.~Dumortier, J.~Llibre, and J.C. Art{\'e}s.
\newblock {\em Qualitative theory of planar differential systems}.
\newblock Universitext. Springer-Verlag, Berlin, 2006.

\bibitem[FGP10]{faugere_vonzur}
J.-C. Faug{\`e}re, J.~{von zur} Gathen, and L.~Perret.
\newblock Decomposition of generic multivariate polynomials.
\newblock In {\em Proceedings of ISSAC 2010}, pages 131--137. ACM, 2010.

\bibitem[FP09]{faugere_perret}
J.-C. Faug{\`e}re and L.~Perret.
\newblock An efficient algorithm for decomposing multivariate polynomials and
  its applications to cryptography.
\newblock {\em J. Symbolic Comput.}, 44(12):1676--1689, 2009.

\bibitem[Gat90a]{Gathen_tame}
J.~{von zur} Gathen.
\newblock Functional decomposition of polynomials: the tame case.
\newblock {\em J. Symbolic Comput.}, 9(3):281--299, 1990.

\bibitem[Gat90b]{Gathen_wild}
J.~{von zur} Gathen.
\newblock Functional decomposition of polynomials: the wild case.
\newblock {\em J. Symbolic Comput.}, 10(5):437--452, 1990.

\bibitem[Gat08]{vonzur_denom}
J.~{von zur} Gathen.
\newblock Counting decomposable multivariate polynomials.
\newblock arXiv:0811.4726v2, 2008.

\bibitem[GG03]{GG}
J.~{von zur} Gathen and J.~Gerhard.
\newblock {\em Modern computer algebra}.
\newblock Cambridge University Press, Cambridge, second edition, 2003.

\bibitem[GGR03]{GatGutRub}
J.~{von zur} Gathen, J.~Gutierrez, and R.~Rubio.
\newblock Multivariate polynomial decomposition.
\newblock {\em Appl. Algebra Engrg. Comm. Comput.}, 14(1):11--31, 2003.

\bibitem[Gie88]{Giesbrecht}
M.~Giesbrecht.
\newblock Some results on the functional decomposition of polynomials.
\newblock Master Thesis, University of Toronto, arXiv:1004.5433, 1988.

\bibitem[GRS01]{Gut_Rub_Sev}
J.~Gutierrez, R.~Rubio, and D.~Sevilla.
\newblock Unirational fields of transcendence degree one and functional
  decomposition.
\newblock In {\em ISSAC '01: Proceedings of the 2001 international symposium on
  Symbolic and algebraic computation}, pages 167--174, New York, NY, USA, 2001.
  ACM Press.

\bibitem[GW95]{Gathen_hom_biv}
J.~{von zur} Gathen and J.~Weiss.
\newblock Homogeneous bivariate decompositions.
\newblock {\em J. Symbolic Comput.}, 19(5):409--434, 1995.

\bibitem[KL89]{Kozen_Landau}
D.~Kozen and S.~Landau.
\newblock Polynomial decomposition algorithms.
\newblock {\em J. Symbolic Comput.}, 7(5):445--456, 1989.

\bibitem[Kl{\"u}99]{Kluners}
J.~Kl{\"u}ners.
\newblock On polynomial decompositions.
\newblock {\em J. Symbolic Comput.}, 27(3):261--269, 1999.

\bibitem[Lec06]{Lec2006}
G.~Lecerf.
\newblock Sharp precision in {H}ensel lifting for bivariate polynomial
  factorization.
\newblock {\em Math. Comp.}, 75(254):921--933 (electronic), 2006.

\bibitem[Lec07]{Lec2007}
G.~Lecerf.
\newblock Improved dense multivariate polynomial factorization algorithms.
\newblock {\em J. Symbolic Comput.}, 42(4):477--494, 2007.

\bibitem[MO04]{Ollagnier}
J.~Moulin~Ollagnier.
\newblock Algebraic closure of a rational function.
\newblock {\em Qual. Theory Dyn. Syst.}, 5(2):285--300, 2004.

\bibitem[PI07]{petravchuk}
A.P. Petravchuk and O.G. Iena.
\newblock On closed rational functions in several variables.
\newblock {\em Algebra Discrete Math.}, (2):115--124, 2007.

\bibitem[Rit22]{Ritt}
J.F. Ritt.
\newblock Prime and composite polynomials.
\newblock {\em Trans. Amer. Math. Soc.}, 23(1):51--66, 1922.

\bibitem[Sto00]{Storjohann2000}
A.~Storjohann.
\newblock {\em Algorithms for matrix canonical forms}.
\newblock PhD thesis, ETH Zurich, Zurich, Switzerland, 2000.

\bibitem[Wat08]{watt_laurent}
S.~Watt.
\newblock Functional decomposition of symbolic polynomials.
\newblock In {\em International Conference on Computatioanl Sciences and its
  Applications}, pages 353--362. IEEE Computer Society, 2008.

\bibitem[Wat09]{watt}
S.~Watt.
\newblock Algorithms for the functional decomposition of laurent polynomials.
\newblock In {\em Conferences on Intelligent Computer Mathematics 2009: 16th
  Symposium on the Integration of Symbolic Computation and Mechanized Reasoning
  and 8th International Conference on Mathematical Knowledge Management ,
  (Calculemus 2009)}, pages 186--200. Springer-Verlag LNAI 5625, 2009.

\bibitem[Wei10]{Weiman}
M.~Weimann.
\newblock A lifting and recombination algorithm for rational factorization of
  sparse polynomials.
\newblock {\em J. Complexity}, 26(6):608--628, 2010.

\bibitem[Zip91]{Zippel}
R.~Zippel.
\newblock Rational function decomposition.
\newblock In {\em Proceedings of the 1991 international symposium on Symbolic
  and algebraic computation}, pages 1--6. ACM Press, 1991.

\end{thebibliography}


\end{document}